\newcommand {\bxt} {\mbox{\footnotesize\boldmath $x$}}
\newcommand {\balphat} {\mbox{\footnotesize\boldmath $\alpha$}}
\newcommand {\byt} {\mbox{\footnotesize\boldmath $y$}}
\newcommand {\bu} {\mbox{\boldmath $u$}}
\newcommand {\bv} {\mbox{\boldmath $v$}}
\newcommand {\bw} {\mbox{\boldmath $w$}}
\newcommand {\bx} {\mbox{\boldmath $x$}}
\newcommand {\by} {\mbox{\boldmath $y$}}
\newcommand {\bz} {\mbox{\boldmath $z$}}
\newcommand {\bA} {\mbox{\boldmath $A$}}
\newcommand {\bB} {\mbox{\boldmath $B$}}
\newcommand {\bE} {\mathbb{E}}
\newcommand {\pr} {\mathbb{P}}
\newcommand {\bF} {\mbox{\boldmath $F$}}
\newcommand {\bH} {\mbox{\boldmath $H$}}
\newcommand {\bP} {\mbox{\boldmath $P$}}
\newcommand {\bR} {\mbox{\boldmath $R$}}
\newcommand {\bZ} {\mbox{\boldmath $Z$}}
\newcommand {\bXt} {\mbox{\boldmath \footnotesize $X$}}
\newcommand {\bYt} {\mbox{\boldmath \footnotesize $Y$}}
\newcommand{\calA}{{\cal A}}
\newcommand{\calC}{{\cal C}}
\newcommand{\calF}{{\cal F}}
\newcommand{\calG}{{\cal G}}
\newcommand{\calI}{{\cal I}}
\newcommand{\calS}{{\cal S}}
\newcommand{\calT}{{\cal T}}
\newcommand{\define}{\stackrel{\triangle}{=}}
\def\btheta{{\mbox{\boldmath $\theta$}}}
\newcommand {\bthet} {\mbox{\footnotesize\boldmath $\btheta$}}
\def\btheta{{\mbox{\boldmath $\theta$}}}
\def\balpha{{\mbox{\boldmath $\alpha$}}}
\def\thetavecsc{{\mbox{\boldmath \tiny $\theta$}}}
\newcommand{\be}{\begin{equation}}
\newcommand{\ee}{\end{equation}}
\newcommand{\beqna}{\begin{eqnarray}}
\newcommand{\eeqna}{\end{eqnarray}}
\DeclareFontFamily{U}{mathx}{\hyphenchar\font45}
\DeclareFontShape{U}{mathx}{m}{n}{
      <5> <6> <7> <8> <9> <10>
      <10.95> <12> <14.4> <17.28> <20.74> <24.88>
      mathx10
      }{}
\DeclareSymbolFont{mathx}{U}{mathx}{m}{n}
\DeclareMathSymbol{\bigtimes}{1}{mathx}{"91}
\newcommand{\abs}[1]{\left|#1\right|}
\DeclareMathOperator{\re}{Re}
\DeclareMathOperator{\Img}{Im}
\DeclareMathOperator{\tr}{tr}
\newtheorem{theorem}{Theorem}
\newtheorem{proof}{Proof}
\newtheorem{lemma}{Lemma}
\newtheorem{remark}{Remark}
\newcommand{\p}[1]{\left(#1\right)}
\newcommand{\pp}[1]{\left[#1\right]}
\newcommand{\ppp}[1]{\left\{#1\right\}}
\newcommand{\norm}[1]{\left\|#1\right\|}
\begin{document}

\title{Universal Decoding for Gaussian Intersymbol Interference Channels}
\author{Wasim~Huleihel
        and~Neri~Merhav
				\\
        Department of Electrical Engineering \\
Technion - Israel Institute of Technology \\
Haifa 32000, ISRAEL\\
E-mail: \{wh@tx, merhav@ee\}.technion.ac.il
\thanks{$^\ast$This research was partially supported by The Israeli Science Foundation (ISF), grant no. 412/12.}
}
\date{}
\maketitle
\thispagestyle{empty}

\IEEEpeerreviewmaketitle

\begin{abstract}
\boldmath A universal decoding procedure is proposed for the intersymbol interference (ISI) Gaussian channels. The universality of the proposed decoder is in the sense of being independent of the various channel parameters, and at the same time, attaining the same random coding error exponent as the optimal maximum-likelihood (ML) decoder, which utilizes full knowledge of these unknown parameters. The proposed decoding rule can be regarded as a frequency domain version of the universal maximum mutual information (MMI) decoder. Contrary to previously suggested universal decoders for ISI channels, our proposed decoding metric can easily be evaluated.
\end{abstract}

\begin{IEEEkeywords}
Universal decoding, interference intersymbol (ISI), error exponents, maximum-likelihood (ML), random coding, maximum mutual information, Gaussian channels, deterministic interference.
\end{IEEEkeywords}

\section{Introduction}

\IEEEPARstart{I}{n} many practical situations encountered in coded communication systems, the specific channel over which transmission is to be carried out is unknown to the receiver. The receiver only knows that the channel belongs to a given family of channels. In such a case, the implementation of the optimum maximum likelihood (ML) decoder is precluded, and thus, universal decoders, independent of the unknown channel, are sought. In designing such a decoder, there are two desirable properties that should be taken into account: The first is that the universal decoder performs asymptotically as well as the ML decoder had the channel law been known, and secondly, that the constructed decoding metric will be reasonably easy to calculate. This paper addresses the problem of universal decoding for intersymbol interference (ISI) Gaussian channels. 

The topic of universal coding and decoding under channel uncertainty has received very much attention in the last four decades, see, for example, \cite{NeriUni,Goppa,CsisKro,Csis2,ZivUni,LapZiv,FerderLapidoth,merFeder,Lomnitz,Lomnitz2,Misra,Shayevitz,Shayevitz2,Shayevitz3,UniNeri2}. In the realm of memoryless channels, Goppa \cite{Goppa} explored the maximum mutual information (MMI) decoder, which chooses the codeword having the maximum empirical mutual information (MMI) with the channel output sequence. It was shown that this decoder achieves the capacity in the case of discrete memoryless channels (DMC). In \cite{CsisKro}, the problem of universal decoding for DMC's with finite input and output alphabets was studied. It was shown that the MMI decoder universally achieves the optimal random coding error exponent under the uniform random coding distribution over a certain type class. In \cite{NeriUni}, an analogous result was derived for a certain parametric class of memoryless Gaussian channels with an unknown deterministic interference signal. In the same paper, a conjecture was proposed concerning a universal decoder for ISI channels. 

For channels with memory, there are several quite general results, each proposing a different universal decoder. In \cite{ZivUni}, the case of unknown finite-state channels with finite input and output alphabets for which the next channel state is a deterministic unknown function of the channel current state and current inputs and outputs, was considered. For uniform random codes over a given set, a universal decoder (that achieves the optimal random coding error exponent) which is based on the Lempel-Ziv algorithm was proposed. Later, in \cite{LapZiv}, it was shown that this decoder continues to be universally asymptotically optimum also for the class of finite-state channels with stochastic, rather than deterministic, next-state functions. In \cite{FerderLapidoth}, sufficient conditions and a universal decoder (called the merging decoder) were proposed, for families of channels with memory. The idea was to employ many decoding lists in parallel, each one corresponding to one point in a dense grid (whose size grows with the input block length) in the index set. Accordingly, with regard to our work, it was shown that the proposed decoder universally achieves the optimal error exponent under the ISI channel. Unfortunately, as was mentioned before, this deocder is very hard to implement in practice due to its implicit structure and the fact that it requires to form a dense grid in the parameter space. In \cite{merFeder}, a competitive minimax criterion was proposed. According to this approach, an optimum decoder is sought in the quest for minimizing (over all decision rules) the maximum (over all channels in the family) ratio between the error probability associated with a given channel and a given decision rule, and the error probability of the ML decoder for that channel, possibly raised some power less than unity. This decoder is, again, very hard to implement for the ISI channel due its complicated decoding metric. 

In this paper, we propose a universal decoder that asymptotically achieves the optimal error exponent, and contrary to previous proposed decoders, our proposed decoding metric can easily be calculated. The technique used in this paper is in line with the techniques which were established in \cite{NeriUni,Wasim}. Specifically, similarly to \cite{NeriUni}, the main idea is to define an auxiliary ``backward channel", which is a mathematical tool for assessing log-volumes of conditional typical sets of sequences with continuous-valued components. These log-volume terms play a pivotal role in the universal decoding metric. The backward channel is defined in a way that guarantees two properties: first, a measure concentration property, that is, assignment of high probability to a given conditional type by an appropriate choice of certain parameters, and secondly, the conditional density of the input given the output, associated with this backward channel should depend on the input and the output only via the sufficient statistics that define the conditional type class. Contrary to the problem considered in \cite{NeriUni}, the difficulty, in the ISI channel, stems from the fact that the choice of the backward channel is a non-trivial issue. It turns out that in this case, the passage to the frequency domain resolves this difficulty. The proposed decoding rule can be regarded as a frequency domain version of the universal maximum mutual information (MMI) decoder. 

The remaining part of this paper is organized as follows. In Section \ref{sec:model}, we first present the model and formulate the problem. Then, the main results are provided and discussed. In Section \ref{sec:proofout}, we provide a proof outline where we discuss the techniques and methodologies that are utilized in order to prove the main result. Finally, in Section \ref{sec:proof}, the main results are proved.

\allowdisplaybreaks

\section{Model Formulation and Main Result}\label{sec:model}
Consider a discrete time, Gaussian channel characterized by
\begin{align}
y_t = \sum_{i=0}^kh_ix_{t-i}+w_t,\ \ t=0,1,2,\ldots,n
\label{GenModel}
\end{align}
where $\ppp{x_t}$ are the channel inputs, $\ppp{h_i}_{i=0}^k$ is the unknown channel impulse response, $\ppp{w_t}$ is zero-mean Gaussian white noise with an unknown variance $\sigma^2>0$, and $\ppp{y_t}$ are the channel outputs. It will be assumed that the noise $\ppp{w_t}$ is statistically independent of the input $\ppp{x_t}$. We allow $k$ to grow with $n$ in the order of $k=o\p{n^{1/2}}$. In such a case, we further assume that the impulse response sequence $\ppp{h_i}_{i=0}^\infty$ is absolutely summable\footnote{This assumption can be relaxed to square summability of $\ppp{h_i}$.}. 

The input is a codeword that is randomly and uniformly drawn over a codebook $\calC = \ppp{\bx^1,\ldots,\bx^M}$ of $M=2^{nR}$ messages $\bx^i = \p{x_1^i,\ldots,x_n^i}\in\mathbb{R}^n$, $i=1,2,\ldots,M$, where $R$ is the coding rate in bits per channel use. In the following, the probability of error associated with the ML decoder, that knows the unknown parameters $\p{\sigma^2,h_0,\ldots,h_k}$, will be denoted by $P_{e,o}\p{\calC,R,n}$. We shall adopt the random coding approach, where each codeword is randomly chosen with respect to a probability measure denoted by $\mu\p{\bx}$. For a given power constraint, a reasonable choice of $\mu\p{\cdot}$ is the truncated Gaussian density restricted to the shell of an $n$-dimensional hypersphere whose radius is about $n\sigma_x^2$. To wit,
\begin{align}
\mu\p{\bx} = \nu^{-1}\psi_{\Delta}\p{\bx}\prod_{t=0}^{n-1}\exp\ppp{-\frac{x_t^2}{2\sigma_x^2}}
\label{inputMeasure}
\end{align}
where $\psi_{\Delta}\p{\bx}$ is the indicator function of the set 
\begin{align}
D_\Delta\define\ppp{\bx:\;\abs{\frac{1}{n}\sum_{t=0}^{n-1}x_t^2-\sigma^2_x}\leq\Delta\sigma_x^2}
\end{align}
where $\Delta\ll1$, and $\nu$ normalizes the above measure such that it would integrate to unity. Note that $\mu\p{\bx}$ is invariant to unitary transformations of $\bx$. It is well-known \cite[Chap. 7]{Gallager} that $\mu\p{\cdot}$ attains a higher error exponent than that of the respective Gaussian density with the same variance, at least for small rates, where the non-typical events (or, the large deviations events) are the dominant\footnote{Intuitively speaking, this is true because of the fact that it does not allow low energy codewords}. The analysis in this paper can also be carried for the case where the codewords are drawn independently and uniformly over a set $\calI_n\subseteq\mathbb{R}^n$ that is endowed with a $\sigma$-algebra (e.g., an $n$-dimensional hypercube), and satisfy an average power constraint, as was considered in \cite[Theorem 4]{FerderLapidoth}. Let $\bar{P}_{e,o}\p{R,n}\define\bE\ppp{P_{e,0}\p{\calC,R,n}}$, where the expectation is taken over the ensemble of randomly selected codebooks under $\mu\p{\cdot}$. Finally, we define the random coding error exponent as $E\p{R}\define-\limsup_{n\to\infty}n^{-1}\log\bar{P}_{e,o}\p{R,n}$. 

As was mentioned previously, we wish to find a decoding procedure which is universal in the sense of being independent of the unknown parameters, and at the same time attaining $E\p{R}$. Specifically, let $P_{e,u}\p{\calC,R,n}$ designate the error probability associated with the universal rule for a given codebook $\calC$, and let $\bar{P}_{e,u}\p{R,n}\define\bE\ppp{P_{e,u}\p{\calC,R,n}}$. Then, we would like $\bar{P}_{e,u}\p{R,n}$ to decay exponentially with rate $E\p{R}$. 

We now turn to present the proposed decoding rule. To this end, let $\tilde{\bx}$ and $\tilde{\by}$ denote the discrete Fourier transforms (DFT) of the sequences $\ppp{x_t}$ and $\ppp{y_t}$, respectively, i.e., the $m$-th component of $\tilde{\bx}$ is given by
\begin{align}
\tilde{x}_m = \frac{1}{\sqrt{n}}\sum_{t=0}^{n-1}x_te^{-j2\pi mt/n}
\end{align}
where $j=\sqrt{-1}$ and similarly for $\tilde{\by}$. Then, define an auxiliary ``backward channel" by the conditional measure
\begin{align}
V\p{\tilde{\bx}\vert\tilde{\by},\btheta,k} = \prod_{m=0}^{n-1}\p{2\pi\sigma_0^2}^{-1/2}\exp\ppp{-\frac{1}{2\sigma_0^2}\abs{\tilde{x}_m-\tilde{y}_m\sum_{l=0}^k\alpha_le^{\frac{2\pi jlm}{n}}}^2}
\label{BackwardChannel}
\end{align}
where $\btheta\define\p{\sigma_0^2,\alpha_0,\ldots,\alpha_k}$ is the parameters vector of the backward channel, in which $\ppp{\alpha_l}_{l=0}^k$ are complex-valued. It should be emphasized that the above definition of the auxiliary backward channel is completely unrelated to the underlying probabilistic model. In particular, it is not argued that $V\p{\tilde{\bx}\vert\tilde{\by},\btheta,k}$ is obtained from $\mu\p{\bx}$ and the forward channel \eqref{GenModel} by the Bayes rule, or any other relationship. For example, our backward channel allows vectors $\bx$ that are outside the region $D_\Delta$. Our decoding rule will select a message $\tilde{\bx}^i$ that maximizes the metric
\begin{align}
u\p{\tilde{\bx}^i,\tilde{\by}} = \frac{\max_{\bthet}V\p{\tilde{\bx}^i\vert\tilde{\by},\btheta,k}}{\mu\p{\tilde{\bx}^i}}
\label{risk}
\end{align}
among all $M$ codewords. The backward channel is a mathematical tool for assessing log-volumes of typical sets \cite{NeriUni,Wasim,Wasim2}, and it should be defined in a way that guarantees two general properties: first, a measure concentration property, that is, assignment of high probability to a given conditional type by an appropriate choice of the parameters of this backward channel, and secondly, the conditional density of $\tilde{\bx}$ given $\tilde{\by}$, associated with the backward channel should depend on $\tilde{\bx}$ and $\tilde{\by}$ only via the sufficient statistics that define the conditional type class. Contrary to the problem considered in \cite{NeriUni}, the difficulty in the ISI channel stems from the fact that the choice of the backward channel is a non-trivial issue. Specifically, as will be seen in the sequel, an ``appropriate" candidate backward channel must depend on a sufficient statistics vector (associated with $\bx$) with dimension that equals to the number of degrees of freedom, which in turn adjust their conditional expectations. It turns out that in this case, the passage to the frequency domain is more ``natural" and mathematically convenient due to the well-known asymptotic spectral properties of Toeplitz matrices (see, for example, \cite{Gray}). To wit, it can be seen that the model in \eqref{GenModel} can be written in a vector form $\by = \bA\bx+\bw$ where $\bA = \ppp{a_{i,j}} = \ppp{h_{i-j}}$ is a Toeplitz matrix. Now, by the spectral decomposition theorem \cite{Widom}, we know that there exists an orthonormal basis that diagonalizes the matrix $\bA$. Projecting the observations onto this basis will simply decompose the original channel into a set of independent channels, which are simpler to analyze. While this is true for any matrix $\bA$, for Toeplitz matrices we can asymptotically characterize their eigenvalues and eigenvectors in terms of the generating sequence $\ppp{h_i}_{i=0}^k$, which is a fundamental part in our analysis. We next give the main result of this paper. 

\begin{theorem}\label{th:1}
Let the codewords of $\calC$ be chosen randomly and independently with respect to the density $\mu\p{\cdot}$ given in \eqref{inputMeasure}. Assume that the channel impulse response coefficients are absolutely summable $\ppp{h_l}_{l=1}^\infty\in\ell_1$, and that $k=o\p{n^{1/2}}$. Then,
\begin{align}
\limsup_{n\to\infty}\frac{1}{n}\pp{\log \bar{P}_{e,u}\p{R,n} - \bar{P}_{e,0}\p{R,n}}\leq\xi\p{\Delta}
\end{align}
where $\xi\p{\Delta}\to0$ as $\Delta\to0$, and $\bar{P}_{e,u}\p{R,n}$ is the average probability of error associated with the universal decoder given in \eqref{risk}.
\end{theorem}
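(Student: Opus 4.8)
The plan is to carry the whole analysis over to the frequency domain, where the Toeplitz structure of $\bA$ becomes asymptotically diagonal. First I would write the channel in vector form $\by=\bA\bx+\bw$ with $\bA=\ppp{h_{i-j}}$ Toeplitz, and pass to the DFT: since $\mu(\cdot)$ is invariant under unitary transformations, the ensemble statistics of $\tilde{\bx}$ are unchanged, while the linear convolution $\bA\bx$ differs from the circular convolution — which the DFT diagonalizes \emph{exactly}, with eigenvalues $H_m=\sum_{l=0}^{k}h_l e^{-2\pi jlm/n}$ — only in $O(k)$ boundary coordinates. Thus, up to a perturbation that I will argue is exponentially negligible, the channel becomes a bank of parallel scalar Gaussian channels $\tilde y_m\approx H_m\tilde x_m+\tilde w_m$, and the known asymptotic spectral properties of Toeplitz matrices \cite{Gray,Widom} let me describe the limiting behavior in terms of the symbol $H(\cdot)$ generated by $\ppp{h_l}$. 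In this domain the relevant conditional type of a pair $(\tilde{\bx},\tilde{\by})$ is captured by a finite-dimensional vector of empirical second-order spectral statistics — precisely the quantities appearing in the exponent of $V\p{\tilde{\bx}\vert\tilde{\by},\btheta,k}$ — and $\max_{\btheta}V\p{\tilde{\bx}\vert\tilde{\by},\btheta,k}$ is a least-squares fit of $\tilde{\bx}$ to $\tilde{\by}$ through an order-$k$ filter, hence a function of that conditional type alone.

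Next I would establish the two structural properties that make $V$ a valid volume surrogate. \emph{(i) Measure concentration:} for every realizable conditional type there is a choice of $\btheta$ under which $V\p{\cdot\vert\tilde{\by},\btheta,k}$ places all but a vanishing fraction of its mass on the corresponding conditional type class; combined with the near-uniformity of $V$ on that class, this gives $\max_{\btheta}V\p{\tilde{\bx}\vert\tilde{\by},\btheta,k}\;\exe\;1/\mathrm{Vol}\p{\text{conditional type class of }\tilde{\bx}\text{ given }\tilde{\by}}$. \emph{(ii) Sufficiency:} the conditional density depends on $(\tilde{\bx},\tilde{\by})$ only through the statistics defining the type class, so codewords in the same conditional class receive essentially the same metric. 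Consequently the decoding metric $u\p{\tilde{\bx}^i,\tilde{\by}}=\max_{\btheta}V/\mu\p{\tilde{\bx}^i}$ behaves, on the exponential scale, like the reciprocal of the probability that an independently $\mu$-drawn codeword lands in the conditional type class of $\tilde{\bx}^i$. From here the argument is the standard method-of-types/random-coding bound: conditioned on $\tilde{\bx}^1$ being sent and on its conditional type given $\tilde{\by}$, the probability that some other codeword attains a metric at least as large is at most $\min\ppp{1,\;M\cdot\mu\p{\text{that conditional type class}}}$, and averaging over $\tilde{\by}$ and over the type of $\tilde{\bx}^1$ yields a bound on $\bar P_{e,u}\p{R,n}$ of the familiar form $\sum_{\text{types}}\pr\pp{\text{type}}\cdot\min\ppp{1,M\cdot(\cdots)}$.

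It then remains to match this bound with the ML exponent. I would evaluate $E(R)$ (equivalently, analyze $\bar P_{e,0}\p{R,n}$) directly in the frequency domain by Gallager's random-coding bound for parallel Gaussian channels under the shell input $\mu(\cdot)$, obtaining an expression in terms of $\ppp{\abs{H_m}^2/\sigma^2}$ and, in the limit, an integral of the symbol; then, by carrying out the $\min\ppp{1,\cdot}$ trade-off in the universal bound and inserting the volume estimates, I would show it optimizes to the very same expression. The truncated-Gaussian shell, as opposed to a plain Gaussian, is exactly what forces the residual gap: the conditional type classes have to be defined with $\Delta$-tolerances, and the difference between their (log-)volumes and the $\Delta\to0$ idealized versions, together with the slack from bounding the indicator $\psi_\Delta$, is $O(\Delta)$ uniformly in $n$; I would collect all such terms into $\xi(\Delta)$, so that $\xi(\Delta)\to0$ as $\Delta\to0$.

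The main obstacle is the uniform control of the Toeplitz-versus-circulant discrepancy in every step above. Three effects must be shown to contribute $o(n)$ on the exponential scale: the $O(k)$ coordinates by which linear and circular convolution differ; the gap between the eigenstructure of $\bA$ and the DFT basis, governed by Sz\"ego-type asymptotics (here $\ell_1$ summability of $\ppp{h_l}$ makes $H(\cdot)$ bounded and continuous and keeps $\log\abs{H_m}$ integrable, so the limiting exponent is finite and empirical spectral sums converge to Riemann integrals); and the cost of fitting the $2(k+1)$ real parameters in $\btheta$. The scaling $k=o\p{n^{1/2}}$ is what makes the associated corrections — which behave like $k^2/n$ and $k/n$ — vanish. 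A secondary technical point is the continuity and boundedness of the volume and exponent functionals in the type parameters; this I would obtain by using the measure-concentration step to restrict attention to a compact range of parameters, which is legitimate because the shell input pins the input power to within $\Delta$.
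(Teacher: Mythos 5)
Your frequency-domain setup, the two properties you require of $V$ (measure concentration and dependence only on the sufficient statistics), and the volume identification $\max_{\btheta}V\p{\tilde{\bx}\vert\tilde{\by},\btheta,k}\exe 1/\mathrm{Vol}$ of the conditional type class are exactly the ingredients the paper uses, and your accounting of the $k=o\p{n^{1/2}}$ corrections is in the right spirit. But the overall architecture of your argument has a genuine gap. You propose to bound $\bar{P}_{e,u}\p{R,n}$ by a method-of-types random-coding computation and then ``match'' it against $E\p{R}$ evaluated via Gallager's bound for the parallel-channel representation. Matching an \emph{upper} bound on $\bar{P}_{e,u}$ against an \emph{upper} bound on $\bar{P}_{e,0}$ proves nothing about $\limsup_n \frac{1}{n}\pp{\log\bar{P}_{e,u}-\log\bar{P}_{e,0}}$: for that you would need the exact (ensemble-tight) exponent of $\bar{P}_{e,0}$ for the ISI channel under the truncated-Gaussian shell ensemble, i.e.\ a matching exponential lower bound on the ML error probability, which your proposal never supplies and which is a substantial problem in its own right (it is precisely what the paper is structured to avoid). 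The paper instead never computes $E\p{R}$: it invokes the comparison lemma of \cite[Lemma~1]{NeriUni} (a continuous analogue of Ziv's decoding lemma), which bounds $\bar{P}_{e,u}$ by $2\bar{P}_{e,o}^\delta$ times a factor controlled by the ratio $\int_{\calS_u}\mu\p{\bx'}\mathrm{d}\bx'\big/\int_{\calS_o^\delta}\mu\p{\bx'}\mathrm{d}\bx'$, and the whole proof reduces to choosing a high-probability set $H_n\p{B}$ and showing this ratio is uniformly subexponential there, using the volume bounds on $\calT_\epsilon^k\p{\tilde{\bx}\vert\tilde{\by}}$ and the exponential equivalence of $W\p{\by\vert\cdot}$ within a conditional type. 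Without either adopting such a relative comparison or proving ensemble tightness of the ML exponent, your plan cannot yield the stated inequality.

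Two secondary points. First, your claim that the Toeplitz-versus-circulant discrepancy is ``exponentially negligible'' because it affects only $O(k)$ coordinates is not how the exactness is handled: the paper never pretends the channel is truly parallel; the metric is defined in the DFT domain from the outset, and Szeg\"{o}-type asymptotics are used only to show that $\frac{1}{n}\log W\p{\by\vert\bu}$ is essentially constant over a conditional type (via the piecewise-constant approximation of the symbol and the binning of Lemma~\ref{lem:3}). Second, your step bounding the probability that a competing codeword beats $\tilde{\bx}^1$ by $\min\ppp{1,M\mu\p{\text{type class of }\tilde{\bx}^1}}$ glosses over the fact that the set of better codewords is a union over many conditional types, not a single class, and over codewords outside the region where the volume estimates apply; the paper handles this by splitting $\calS_u$ with $H_n\p{B_0\vert\tilde{\by}}$ and covering with a polynomial number $\p{2B_0/\epsilon}^{2k+3}$ of types. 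These are repairable, but the missing lower bound on $\bar{P}_{e,0}$ is not.
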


The intuitive interpretation of \eqref{risk} is that $n^{-1}\log u\p{\tilde{\bx},\tilde{\by}} = n^{-1}\log\max_{\bthet}V\p{\tilde{\bx}\vert\tilde{\by},\btheta,k}/\mu\p{\tilde{\bx}}$ is an empirical version of the per-letter mutual information between $\bx$ and $\by$ in the frequency domain. Thus, we select the input $\tilde{\bx}$ that seems empirically ``most dependent" upon the given output vector $\tilde{\by}$ in the frequency domain, which corresponds to the MMI principle. The passage to the frequency domain asymptotically eliminates the strong interactions between the various components of the input vector, and transforms the original model into a set of $n$ separable channels which are controlled by $\p{k+2}$ degrees of freedom. Note that on the support of $\mu\p{\cdot}$, the term $n^{-1}\log \mu\p{\tilde{\bx}^i}$ is nearly a constant independent of $i$. Thus, the proposed decoding rule is essentially equivalent to one that maximizes $\max_{\bthet}V\p{\tilde{\bx}\vert\tilde{\by},\btheta,k}$, namely, maximum a posteriori (MAP) decoding.

\begin{remark}
In \cite{NeriUni}, a universal decoding procedure for memoryless Gaussian channels with a deterministic interference was proposed. Accordingly, we remark that Theorem \ref{th:1} can be fairly easily extended to the channel model
\begin{align}
y_t = \sum_{i=0}^kh_ix_{t-i}+z_t+w_t
\end{align}
where $\ppp{z_t}$ is an unknown deterministic interference that can be decomposed as a series expansion of orthonormal bounded functions with an absolutely summable coefficient sequence, namely,
\begin{align}
z_t = \sum_{i=1}^\infty b_i\phi_{i,t}, \ \ t=1,2,\ldots
\end{align}
where $\ppp{b_i}\in\ell_1$ and $\abs{\phi_{i,t}}\leq L<\infty$ for all $i$ and $t$. The coefficients $\ppp{b_i}$ are assumed deterministic and unknown. In this case, an appropriate definition of the auxiliary backward channel is
\begin{align}
\tilde{V}\p{\tilde{\bx}\vert\tilde{\by},\btheta,k,q} = \prod_{m=1}^n\p{2\pi\sigma_0^2}^{-1/2}\exp\ppp{-\frac{1}{2\sigma_0^2}\abs{\tilde{x}_m-\tilde{y}_m\sum_{l=0}^k\alpha_le^{\frac{2\pi jlm}{n}}-\sum_{i=1}^q\beta_i\tilde{\phi}_{i,m}}^2}
\end{align}
where now $\btheta\define\p{\sigma_0^2,\alpha_0,\ldots,\alpha_k,\beta_1,\ldots,\beta_q}$ is the parameter vector of the backward channel, $\ppp{\tilde{\phi}_{i,m}}$ is the frequency transformed representation of $\ppp{\phi_{i,t}}$, and $q=q_n$ is assumed to be a monotonically non-decreasing integer-valued sequence such that $q_n\to\infty$ and $q_n = o\p{n^{1/3}}$. Accordingly, the decoding rule will select a message $\bx^i$ that maximizes the metric \eqref{risk} (where $V$ in \eqref{risk} is replaced with $\tilde{V}$), among all $M$ codewords. For simplicity of the exposition and to facilitate the reading of the proof of Theorem \ref{th:1}, we will assume the original model \eqref{GenModel}.
\end{remark}

\section{Proof Outline}\label{sec:proofout}
In this section, before getting deep into the proof of Theorem \ref{th:1}, we discuss the techniques and the main steps which will be used in Section \ref{sec:proof}. In order to facilitate the explanations, we will need the following definitions: Let $\bx$ and $\by$ be arbitrary vectors in $\mathbb{R}^n$ and define
\begin{align}
&\calS_o\p{\bx,\by}\define\ppp{\bx':W\p{\by\vert\bx'}>W\p{\by\vert\bx}},\label{1o}\\
&\calS_u\p{\bx,\by}\define\ppp{\bx':u\p{\bx',\by}>u\p{\bx,\by}}\label{2o},
\end{align}
and 
\begin{align}
\calS_0^\delta\p{\bx,\by}\define\ppp{\bx':\frac{1}{n}\log W\p{\by\vert\bx'}>\frac{1}{n}\log W\p{\by\vert\bx}-\delta},\label{3o}
\end{align}
where $W\p{\by\vert\bx}$ is the conditional pdf associated with the channel. In words, $\calS_o\p{\bx,\by}$ and $\calS_u\p{\bx,\by}$ are simply the sets of prospective incorrect codewords corresponding to the ML decoder, and the proposed universal decoder, respectively, assuming that $\bx$ is the transmitted codewords and that $\by$ is the received vector. The set $\calS_0^\delta\p{\bx,\by}$ is just a $\delta$-perturbed version of $\calS_0\p{\bx,\by}$ which will be used for technical reasons. Finally, we let $\bar{P}_{e,o}\p{R,n}$, $\bar{P}_{e,u}\p{R,n}$, and $\bar{P}_{e,o}^\delta\p{R,n}$ be the average error probabilities associated with the ML decoder, the proposed decoder, and the $\delta$-perturbed decoder (see, \eqref{per1}-\eqref{per3}). 

Generally speaking, the root of our analysis is Lemma \ref{lem:1}, which was asserted and proved in \cite[Lemma 1]{NeriUni}, and can be thought as a continuous extension of \cite[Corollary 1]{ZivUni}. This result relates between $\bar{P}_{e,o}^\delta\p{R,n}$ and $\bar{P}_{e,u}\p{R,n}$ as follows
\begin{align}
\bar{P}_{e,u}\p{R,n}\leq 2\bar{P}_{e,o}^\delta\p{R,n}\pp{\frac{3}{2}+\sup_{\p{\bxt,\byt}\in H_n}\frac{\int_{\calS_u\p{\bxt,\byt}}\mu\p{\bx'}\mathrm{d}\bx'}{\int_{\calS_o^\delta\p{\bxt,\byt}}\mu\p{\bx'}\mathrm{d}\bx'}}.
\label{unRes}
\end{align}
where $\ppp{H_n}_{n\geq1}$ is a sequence of sets of pairs $\p{\bx,\by}$ such that
\begin{align}
\limsup_{n\to\infty}\frac{1}{n}\log\pr\ppp{H_n^c}<-E\p{R}.
\label{priHn}
\end{align}
Whence, we see that in order to show that $\bar{P}_{e,u}\p{R,n}$ and $\bar{P}_{e,o}\p{R,n}$ are exponentially the same, we just need to define a sequence $\ppp{H_n}_{n\geq1}$ such that the ratio in \eqref{unRes}
\begin{align}
\frac{\int_{\calS_u\p{\bxt,\byt}}\mu\p{\bx'}\mathrm{d}\bx'}{\int_{\calS_o^\delta\p{\bxt,\byt}}\mu\p{\bx'}\mathrm{d}\bx'}
\label{subexponentialdec}
\end{align}
is uniformly overbounded by a subexponential function of $n$, i.e., $e^{n\epsilon_n}$ where $\epsilon_n\to0$ as $n\to\infty$ uniformly for all $\p{\bx,\by}\in H_n$. Once this accomplished, the proof of the theorem will be complete. The main question is now how to define the sequence $\ppp{H_n}_{n\geq1}$ properly? To answer this question, let us interpret its role. The set $H_n$ simply divides the space of pairs $\p{\bx,\by}$ into two parts, where in the first part, the supremum in \eqref{unRes} is uniformly bounded by a subexponential function of $n$, and the second part possesses a probability smaller than the desired exponential function $e^{-nE(R)}$ and hence negligible (see, \eqref{priHn}). Obviously, given these requirements one can propose several candidates for $H_n$, namely, the choice is not unique. However, another important property that $\ppp{H_n}_{n\geq1}$ should account for is that the function $n^{-1}\log V\p{\tilde{\bx}\vert\tilde{\by},\btheta,k}$ will be uniformly continuous w.r.t. small perturbations of the sufficient statistics (this idea will be emphasized in the analysis). To summarize, the first part in the forthcoming analysis is to define the sequence $\ppp{H_n}_{n\geq1}$ such that \eqref{priHn} holds true, and that hopefully \eqref{unRes} will hold too. The proposed $\ppp{H_n}_{n\geq1}$ is given in Lemma \ref{lem:Suff}, and the main tool that is used in the proof is large deviations theory. 

Following the first part, in the second part, we will eventually show that the chosen $H_n$ fulfills the desired subexponential behavior of \eqref{subexponentialdec}. Accordingly, we will overbound \eqref{subexponentialdec} within $H_n$ as follows: we will derive an upper bound on the numerator of \eqref{subexponentialdec} and a lower bound on its denominator, and show that these are exponentially equivalent. To this end, we will need to define a conditional typical set of our continuous-valued input-output sequences, establish some of its properties, and particularly to calculate its volume (Lebesgue measure). This typical set of some sequence $\tilde{\bx}$ given $\tilde{\by}$ will contain all the vectors which, within $\epsilon>0$, have the same sufficient statistics as $\tilde{\bx}$ induced by our backward channel (see \eqref{TypConCon} for a precise definition of this set). Then, we will provide upper and lower bounds (which are exponentially of the same order) on the volume of this typical set. To accomplish this, we will use methods that were previously used in \cite{NeriUni,Wasim,Wasim2}, which are based on large deviations theory and methods that are customary to statistical physics. After that, we will show that for any two vectors $\bu$ and $\bv$ that belong to this typical set, the conditional pdf's $W\p{\by\vert\bu}$ and $W\p{\by\vert\bv}$ are exponentially equivalent, that is, for sufficiently large $n$,
\begin{align}
\abs{\frac{1}{n}\log W\p{\by\vert\bu}-\frac{1}{n}\log W\p{\by\vert\bv}}<\zeta
\end{align}
for any $\zeta>0$. Thus, given this property, we can easily provide a lower bound on the denominator of \eqref{subexponentialdec}. Indeed, since $\bx\in\calS_o^\delta\p{\bx,\by}$, then in view of the last result, there exists a sufficiently small $\epsilon>0$ such that the predefined typical set is essentially a subset of $\calS_o^\delta\p{\bxt,\byt}$. Therefore, the integral over $\calS_o^\delta\p{\bxt,\byt}$, in the denominator, can be underestimated as an integral over the typical set, and since we know its volume (or, more precisely, a lower bound on it which is exponentially tight), it is not difficult to provide a lower bound on this integral (see \eqref{eq:1} for more details). Providing an upper bound on the numerator is a little more involved. The underlying idea is to partition the set $\calS_u\p{\bxt,\byt}$ into a subexponential number of conditional types, where for each conditional type, the integral over the respective conditional type is overestimated using the upper bound on the volume. Finally, it will be shown that these two bounds are exponentially equivalent, which implies that \eqref{subexponentialdec} is subexponential function of $n$, as required.  

\section{Proof of Theorem \ref{th:1}}\label{sec:proof}
For completeness, in this section, we will provide again some definitions that were already presented in short in the previous section. Let $\bx$ and $\by$ be arbitrary vectors in $\mathbb{R}^n$ and define $\calS_o\p{\bx,\by}$ and $\calS_u\p{\bx,\by}$ as in eqs. \eqref{1o} and \eqref{2o}, respectively.  
The average error probabilities associated with the ML decoder and the proposed decoder are given by (see, for example, \cite{NeriUni})
\begin{align}
\bar{P}_{e,o}\p{R,n} = 1-\bE\ppp{\pp{1-\int_{\calS_o\p{\bXt,\bYt}}\mu\p{\bx'}\mathrm{d}\bx'}^{2^{nR}-1}}
\label{per1}
\end{align}
and
\begin{align}
\bar{P}_{e,u}\p{R,n} = 1-\bE\ppp{\pp{1-\int_{\calS_u\p{\bXt,\bYt}}\mu\p{\bx'}\mathrm{d}\bx'}^{2^{nR}-1}},
\label{per2}
\end{align}
respectively, where the expectations are taken with respect to (w.r.t.) the joint distribution $\mu\p{\bx}W\p{\by\vert\bx}$, and we use the usual conventions where random vectors are denoted by capital letters in bold face font, and their sample values are denoted by the respective lower case letters. Similar convention will apply to scalar random variables (RVs), which will be denoted with same symbols without the bold face font. Finally, for $\delta>0$ we define the set
\begin{align}
\calS_0^\delta\p{\bx,\by}\define\ppp{\bx':\frac{1}{n}\log W\p{\by\vert\bx'}>\frac{1}{n}\log W\p{\by\vert\bx}-\delta},
\end{align}
and accordingly
\begin{align}
\bar{P}_{e,o}^\delta\p{R,n} = 1-\bE\ppp{\pp{1-\int_{\calS_0^\delta\p{\bXt,\bYt}}\mu\p{\bx'}\mathrm{d}\bx'}^{2^{nR}-1}}.
\label{per3}
\end{align}
Finally, with a slight abuse of notation, we also use the notation ${\calS}_o\p{\tilde{{\bx}},\tilde{\by}}$ which is defined as follows: Let $\tilde{\bx}$ and $\tilde{\by}$ be the Fourier transforms of $\bx$ and $\by$, respectively. Then, ${\calS}_o\p{\tilde{{\bx}},\tilde{\by}}\define\ppp{\tilde{\bx}' = \bF^H\bx':\ \bx'\in\calS_0\p{\bx,\bF^H\tilde{\by}}}$ where $\bF$ is the DFT matrix, namely, $\bF = \ppp{e^{j2\pi ml/n}/\sqrt{n}}_{m,l=0}^{n-1}$. 

As was discussed earlier, our goal is to compare the exponential behavior of $\bar{P}_{e,u}\p{R,n}$ to that of $\bar{P}_{e,o}\p{R,n}$. To this end, we will instead compare the exponential behavior of $\bar{P}_{e,u}\p{R,n}$ to that of $\bar{P}_{e,o}^\delta\p{R,n}$ for small $\delta>0$. In the final step of the proof, this will be justified by showing that
\begin{align}
\limsup_{n\to\infty}\frac{1}{n}\pp{\log\bar{P}_{e,o}^\delta\p{R,n} - \log\bar{P}_{e,o}\p{R,n}}\leq\delta'
\label{justified}
\end{align}
where $\delta'\to0$ as $\delta\to0$ and $\Delta\to0$. In the analysis, we will use the following lemma \cite[Lemma 1 pp. 1263]{NeriUni}.
\begin{lemma}\label{lem:1}
Let $\ppp{H_n}_{n\geq1}$ be a sequence of sets of pairs $\p{\tilde{\bx},\tilde{\by}}$ of $n$-dimensional vectors such that
\begin{align}
\limsup_{n\to\infty}\frac{1}{n}\log\pr\ppp{H_n^c}<-E\p{R}
\label{Hconstranit}
\end{align}
Then, for all large $n$,
\begin{align}
\bar{P}_{e,u}\p{R,n}\leq 2\bar{P}_{e,o}^\delta\p{R,n}\pp{\frac{3}{2}+\sup_{\p{\tilde{\bxt},\tilde{\byt}}\in H_n}\frac{\int_{\calS_u\p{\tilde{\bxt},\tilde{\byt}}}\mu\p{\bx'}\mathrm{d}\bx'}{\int_{\calS_o^\delta\p{\tilde{\bxt},\tilde{\byt}}}\mu\p{\bx'}\mathrm{d}\bx'}}.
\label{supremum}
\end{align}
\end{lemma}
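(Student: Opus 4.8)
The plan is to establish Lemma~\ref{lem:1} by following the combinatorial argument of \cite[Lemma 1]{NeriUni}, adapted to the Fourier-domain formulation used here; since the lemma is asserted to be a continuous analogue of \cite[Corollary 1]{ZivUni} and the statement is essentially a verbatim restatement from \cite{NeriUni}, the proof is largely a bookkeeping exercise, the crux of which is correctly tracking how the universal decoder's error region $\calS_u$ relates to the $\delta$-perturbed ML region $\calS_o^\delta$. First I would fix a transmitted codeword $\tilde{\bx}$ and received vector $\tilde{\by}$, and condition on whether the pair $(\tilde{\bx},\tilde{\by})$ lies in $H_n$ or its complement. On $H_n^c$, I would simply bound the contribution to $\bar{P}_{e,u}$ by $\pr\{H_n^c\}$, which by \eqref{Hconstranit} decays strictly faster than $e^{-nE(R)}$, hence faster than $\bar{P}_{e,o}^\delta(R,n)$ itself (whose exponent is at most $E(R)+\delta'$ by \eqref{justified}), so this term is absorbed into the constant $\tfrac32$ (or a slightly larger absolute constant) for all large $n$.

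The main work is on the event $(\tilde{\bx},\tilde{\by})\in H_n$. Here I would write, for a single randomly drawn codeword $\bX'$, the pairwise error event $\{\bX'\in\calS_u(\tilde{\bx},\tilde{\by})\}$ and compare its $\mu$-probability to $\{\bX'\in\calS_o^\delta(\tilde{\bx},\tilde{\by})\}$. The key structural fact, inherited from the construction in \cite{NeriUni}, is that the universal metric $u(\tilde{\bx}',\tilde{\by}) = \max_{\bthet} V(\tilde{\bx}'\vert\tilde{\by},\btheta,k)/\mu(\tilde{\bx}')$ dominates, up to the normalization $\mu$, the likelihood that would be assigned by \emph{any} channel in the parametric family, and in particular by the true ML likelihood $W(\by\vert\cdot)$ suitably calibrated; this gives the inclusion-type relation that $\calS_u(\tilde{\bx},\tilde{\by})$ is not much larger, in $\mu$-measure, than $\calS_o^\delta(\tilde{\bx},\tilde{\by})$. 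Then I would use the elementary inequality $1-(1-a)^{N}\le 1-(1-b)^{N}+N(a-b)_+$ together with $\min\{Na,1\}$-type truncation — exactly the device used to pass from the single-letter probabilities back to the block error probabilities in \eqref{per1}--\eqref{per3} — to convert the ratio bound on the integrals into the claimed bound \eqref{supremum} on $\bar{P}_{e,u}$. The supremum over $(\tilde{\bxt},\tilde{\byt})\in H_n$ appears precisely because the ratio of the two integrals must be controlled uniformly over the ``good'' set before taking expectations.

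The step I expect to be the main obstacle is the uniformity: one must ensure that the bound relating $\int_{\calS_u}\mu$ to $\int_{\calS_o^\delta}\mu$ holds with constants that do not depend on the particular pair $(\tilde{\bx},\tilde{\by})\in H_n$, and that the truncation argument combining the per-pair bounds into the block-level statement does not lose more than a constant factor. This is where the defining properties of $H_n$ (to be supplied by Lemma~\ref{lem:Suff}) enter implicitly — but for the purposes of \emph{this} lemma they are abstracted away into hypothesis \eqref{Hconstranit}, so the argument here only needs \eqref{Hconstranit} plus the generic monotonicity/truncation manipulations. Consequently I would present the proof as a direct transcription of \cite[Lemma 1]{NeriUni}: set up the two error regions, split on $H_n$ versus $H_n^c$, apply the $1-(1-a)^N$ inequality with truncation on each part, and collect terms to reach \eqref{supremum}; the only genuinely new feature is that all vectors are now indexed in the DFT domain, which is purely notational since $\bF$ is unitary and $\mu$ is invariant under unitary transformations, so $\int\mu\,\mathrm{d}\bx'$ is unchanged by the passage $\bx'\mapsto\tilde{\bx}'$.
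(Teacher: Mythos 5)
First, note that the paper does not prove this lemma at all: it is imported verbatim from \cite[Lemma 1]{NeriUni} (itself the continuous analogue of \cite[Corollary 1]{ZivUni}), so the comparison is with that cited proof. Your skeleton --- split on $H_n$ versus $H_n^c$, bound the pairwise-error integrals, apply $1-(1-a)^N$ truncation inequalities, and keep uniformity by taking the supremum over $H_n$ --- is indeed that argument, and your remark that the passage to the DFT domain is immaterial because $\bF$ is unitary and $\mu$ is unitarily invariant is correct. However, one step as you wrote it is a genuine conceptual error: the ``key structural fact'' you invoke (that the universal metric $u$ dominates the likelihood of every channel in the family, so that $\calS_u\p{\tilde{\bx},\tilde{\by}}$ is ``not much larger in $\mu$-measure'' than $\calS_o^\delta\p{\tilde{\bx},\tilde{\by}}$) is neither available nor needed here. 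Lemma \ref{lem:1} is a generic statement valid for an \emph{arbitrary} decoding metric and an \emph{arbitrary} sequence $\ppp{H_n}$ satisfying \eqref{Hconstranit}; on $H_n$ the only inequality used is the tautological $\int_{\calS_u}\mu\p{\bx'}\mathrm{d}\bx'\leq\rho_n\int_{\calS_o^\delta}\mu\p{\bx'}\mathrm{d}\bx'$ with $\rho_n$ the supremum appearing in \eqref{supremum}, which holds by definition of that supremum. The claim that this ratio is actually subexponential is the substance of the \emph{rest} of the paper (Lemmas \ref{lem:Suff}--\ref{lem:Volumes} and the volume estimates); if the proof of Lemma \ref{lem:1} leaned on it, the overall logic would be circular. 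That step should simply be deleted and replaced by the definitional bound.

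Second, your absorption of the $H_n^c$ contribution is justified the wrong way around. To dominate $\pr\ppp{H_n^c}$ by a constant multiple of $\bar{P}_{e,o}^\delta\p{R,n}$ you need a \emph{lower} bound on $\bar{P}_{e,o}^\delta$, whereas \eqref{justified} is an upper bound on $\bar{P}_{e,o}^\delta$ relative to $\bar{P}_{e,o}$ --- and, worse, it is established only at the end of the paper by an argument that itself invokes Lemma \ref{lem:1}, so citing it here risks circularity. The correct and elementary route is monotonicity: $\calS_o\p{\bx,\by}\subseteq\calS_o^\delta\p{\bx,\by}$ gives $\bar{P}_{e,o}^\delta\geq\bar{P}_{e,o}$, whose exponent is $E\p{R}$ by definition, while \eqref{Hconstranit} makes $\pr\ppp{H_n^c}$ decay strictly faster; this is how the cited proof absorbs the bad event and how the explicit constants $2$ and $3/2$ in \eqref{supremum} arise (via $1-(1-a)^N\leq\min\ppp{1,Na}$ and $1-(1-b)^N\geq\frac{1}{2}\min\ppp{1,Nb}$, rather than your additive inequality, if one wants the stated constants and not merely ``a slightly larger absolute constant''). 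With these two repairs your outline coincides with the proof of \cite[Lemma 1]{NeriUni}.
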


Thus, by using Lemma \ref{lem:1}, we see that in order to show that $\bar{P}_{e,u}\p{R,n}$ and $\bar{P}_{e,o}\p{R,n}$ are exponentially the same, we just need to find a sequence $\ppp{H_n}_{n\geq1}$ such that the ratio
\begin{align}
\frac{\int_{\calS_u\p{\tilde{\bxt},\tilde{\byt}}}\mu\p{\bx'}\mathrm{d}\bx'}{\int_{\calS_o^\delta\p{\tilde{\bxt},\tilde{\byt}}}\mu\p{\bx'}\mathrm{d}\bx'}
\label{overBound}
\end{align}
is uniformly overbounded by a subexponential function of $n$, i.e., $e^{n\epsilon_n}$ where $\epsilon_n\to0$ as $n\to\infty$ uniformly for all $\p{\tilde{\bx},\tilde{\bx}}\in H_n$. 
For a given pair $\p{\tilde{\bx},\tilde{\by}}$, let us define $\hat{\btheta} = \p{\hat{\sigma}_0^2,\hat{\alpha}_0,\ldots,\hat{\alpha}_k}$ to be
\begin{align}
\hat{\btheta}\define\arg\max_{\theta}V\p{\tilde{\bx}\vert\tilde{\by},\btheta,k}.
\end{align}
The set $H_n$ will be parametrized by a parameter $B>0$ and defined as follows
\begin{align}
H_n\p{B}\define \ppp{\p{\tilde{\bx},\tilde{\by}}:\;\abs{\frac{1}{n}\sum_{m=0}^{n-1}\abs{\tilde{x}_m}^2-\sigma_x^2}\leq\Delta\sigma_x^2,\;\frac{1}{n}\sum_{m=0}^{n-1}\abs{\tilde{y}_m}^2\leq B,\;\hat{\sigma}_0^2\geq\frac{1}{B}}.
\label{HnSet}
\end{align}
We have the following result. 
\begin{lemma}\label{lem:Suff}
There exists a sufficiently large $B$ such that $\ppp{H_n\p{B}}_{n\geq1}$ satisfies \eqref{Hconstranit}. 
\end{lemma}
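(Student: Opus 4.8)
The plan is to show that the complement of $H_n(B)$ is contained in the union of three events, each of which has probability decaying exponentially fast with a rate that can be made larger than $E(R)$ by choosing $B$ large enough. Write $H_n(B)^c \subseteq A_n \cup B_n \cup C_n$, where $A_n = \{|\frac{1}{n}\sum_m |\tilde{x}_m|^2 - \sigma_x^2| > \Delta\sigma_x^2\}$, $B_n = \{\frac{1}{n}\sum_m |\tilde{y}_m|^2 > B\}$, and $C_n = \{\hat{\sigma}_0^2 < 1/B\}$. By the union bound it suffices to bound $\pr\{A_n\}$, $\pr\{B_n\}$, and $\pr\{C_n\}$ separately; since the expectation is over $\mu(\bx)W(\by|\bx)$, the DFT is unitary and $\mu$ is invariant under unitary transformations, so $\frac{1}{n}\sum_m|\tilde{x}_m|^2 = \frac{1}{n}\sum_t x_t^2$ and similarly energies are preserved.

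First I would dispose of $A_n$: under $\mu(\cdot)$, every codeword lies in $D_\Delta$ by construction, so $\frac{1}{n}\sum_m|\tilde{x}_m|^2 = \frac{1}{n}\sum_t x_t^2$ satisfies $|\frac{1}{n}\sum_t x_t^2 - \sigma_x^2| \le \Delta\sigma_x^2$ with probability one, hence $\pr\{A_n\} = 0$. This term therefore imposes no constraint and is the reason the first condition appears verbatim in both $H_n(B)$ and the support of $\mu$. Second, for $B_n$, I would write $y_t = \sum_{i=0}^k h_i x_{t-i} + w_t$, so that $\frac{1}{n}\sum_t y_t^2 \le \frac{2}{n}\sum_t (\sum_i h_i x_{t-i})^2 + \frac{2}{n}\sum_t w_t^2$. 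The first term is bounded deterministically: by Cauchy--Schwarz and the absolute summability $\sum_i |h_i| = \|\bh\|_1 < \infty$, one gets $\frac{1}{n}\sum_t (\sum_i h_i x_{t-i})^2 \le \|\bh\|_1^2 \cdot \frac{1}{n}\sum_t x_t^2 \le \|\bh\|_1^2(1+\Delta)\sigma_x^2$ (handling the edge/boundary terms from indices $t-i<0$, which contribute $o(1)$ since $k = o(n^{1/2})$). The second term $\frac{1}{n}\sum_t w_t^2$ is an average of i.i.d.\ scaled chi-square variables with mean $\sigma^2$, so by Cramér's theorem $\pr\{\frac{1}{n}\sum_t w_t^2 > s\}$ decays exponentially with a rate $I_{\sigma^2}(s)$ that tends to $\infty$ as $s \to \infty$. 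Hence for $B$ larger than $4\|\bh\|_1^2(1+\Delta)\sigma_x^2$, say, $\pr\{B_n\}$ decays with a rate exceeding $E(R)$.

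Third, and this is the main obstacle, I must control $C_n = \{\hat{\sigma}_0^2 < 1/B\}$. The quantity $\hat{\sigma}_0^2$ is the minimized (over $\sigma_0^2$ and the $\alpha_l$) value of the empirical variance in the backward channel; minimizing the Gaussian exponent over $\sigma_0^2$ gives $\hat{\sigma}_0^2 = \min_{\balpha} \frac{1}{n}\sum_m |\tilde{x}_m - \tilde{y}_m \sum_l \alpha_l e^{2\pi jlm/n}|^2$, i.e.\ the residual energy of the least-squares fit of $\tilde{\bx}$ by a $(k+1)$-tap filtered version of $\tilde{\by}$. So $\hat{\sigma}_0^2$ small means $\tilde{\bx}$ is nearly in the $(k+1)$-dimensional subspace spanned by the modulated copies of $\tilde{\by}$. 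I would show this is exponentially unlikely by lower-bounding $\hat{\sigma}_0^2$: writing $\tilde{x}_m = \tilde{h}_m^{\text{(true)}}\tilde{x}_m + \tilde{w}_m$-type relations is the wrong direction, so instead I would argue directly that for the true channel, $\by = \bA\bx + \bw$, the residual after projecting out \emph{any} $(k+1)$-dimensional filter subspace still retains a constant fraction of the noise energy $\frac{1}{n}\|\bw\|^2$, because the projection kills at most $k+1 = o(n^{1/2})$ degrees of freedom out of $n$; thus $\hat{\sigma}_0^2 \ge \frac{1}{n}\|\bw\|^2 - o(1) \cdot (\text{stuff})$ is bounded below by roughly $\sigma^2/2$ with exponentially high probability, again via Cramér's theorem applied to $\frac{1}{n}\|\bw\|^2$ (now its lower tail, $\pr\{\frac{1}{n}\sum_t w_t^2 < s\}$, whose rate $\to\infty$ as $s\to 0$). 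Making the ``projection removes only $o(n)$ of the noise energy'' step rigorous — in particular uniformly over all $\balpha$, which requires a careful argument about the minimal singular value or a covering/net argument over filter coefficients — is where the real work lies; the Toeplitz/circulant spectral structure and $k = o(n^{1/2})$ are exactly what make it go through. Combining the three bounds and taking $B$ large enough that each rate exceeds $E(R)$ completes the proof.
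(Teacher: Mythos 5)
Your treatment of the first two events matches the paper: the input-energy condition is automatic on the support of $\mu$, and the output-energy event is reduced, via the boundedness of the channel's $\ell_1$ (or spectral) norm, to a Chernoff/Cram\'er bound on $\frac{1}{n}\sum_t W_t^2$ whose rate grows with $B$. The genuine gap is in the third event $\{\hat{\sigma}_0^2<1/B\}$, which is the heart of the lemma. You correctly identify $\hat{\sigma}_0^2=\min_{\balpha}\frac{1}{n}\sum_m\abs{\tilde{x}_m-\tilde{y}_m\sum_l\alpha_le^{2\pi jlm/n}}^2$, but your proposed lower bound $\hat{\sigma}_0^2\geq\frac{1}{n}\norm{\bw}^2-o(1)$ is false in general, and the ``projection kills only $k+1$ of $n$ degrees of freedom of the noise'' heuristic does not apply. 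The quantity being projected is $\tilde{\bx}$, not $\bw$ or $\by$, and the $(k+1)$-dimensional subspace is spanned by modulated copies of $\tilde{\by}$, which is strongly correlated with $\tilde{\bx}$; asymptotically the residual concentrates near the per-frequency MMSE of estimating $x$ from $y$, roughly $\int\frac{\sigma_x^2\sigma^2}{\abs{H(\omega)}^2\sigma_x^2+\sigma^2}\frac{d\omega}{2\pi}$, which can be far smaller than $\sigma^2$ (e.g.\ when $\abs{H}$ is large), so no bound in terms of the noise energy alone can hold. Fixing the constant does not rescue the argument either, because a fixed-subspace concentration statement is unavailable (the subspace is data-dependent), and a uniform argument over $\balpha$ requires first confining $\balpha$ to a bounded set, which you do not address.

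The paper's route around this is worth contrasting: it first adds the auxiliary events $\frac{1}{n}\sum_m\abs{\tilde{Y}_m}^2\leq\sqrt{B}$ and $\min_m\abs{\tilde{Y}_m}^2\geq\tau$ (each handled by large deviations), uses the identity $\hat{\sigma}_0^2=\frac{1}{n}\sum_m\abs{\tilde{x}_m}^2-\frac{1}{n}\sum_m\abs{\tilde{y}_m}^2\abs{\sum_l\hat{\alpha}_le^{2\pi jlm/n}}^2\geq 0$ together with $\min_m\abs{\tilde{y}_m}^2\geq\tau$ to deduce $\sum_l\abs{\hat{\alpha}_l}^2\leq C(\tau,\Delta)$, then quantizes $\balpha$ on a grid of subexponentially many points (since $k=o(n^{1/2})$), and for each fixed grid point bounds the probability of the event $\calF_{\balphat}$ (small residual and bounded output energy) by a volume estimate, $\text{Vol}\{\calF_{\balphat}\}\leq(2\pi^2e^2/\sqrt{B})^n$, multiplied by a uniform bound on the joint density $\mu(\bx)W(\by\vert\bx)$, yielding a rate $\frac{1}{2}\log(B\sigma^2/\pi e^2)$ that exceeds $E(R)$ for large $B$. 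Your covering-net instinct points in this direction, but without the boundedness of $\hat{\balpha}$ and a per-grid-point probability bound that does not go through the (false) noise-energy comparison, the proposal as written does not establish the lemma.
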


\begin{proof}[Proof of Lemma \ref{lem:Suff}]
By the union bound we have that
\begin{align}
\pr\ppp{H_n^c\p{B}}\leq \pr\ppp{\frac{1}{n}\sum_{t=0}^{n-1}Y_t^2> B} +\pr\ppp{\hat{\sigma}_0^2< B^{-1}}.
\label{Lins}
\end{align}
Thus, it should be shown that if $B$ is sufficiently large, both probabilities on the right-hand side of \eqref{Lins} decays faster than $e^{-nE\p{R}}$. Regarding the first term, note that
\begin{align}
\frac{1}{n}\sum_{t=0}^{n-1}y_t^2 &\leq \pp{\sqrt{\frac{1}{n}\norm{\bH\bx}^2}+\sqrt{\frac{1}{n}\sum_{t=0}^{n-1}w_t^2}}^2\\
&\leq\pp{\sqrt{\frac{1}{n}\norm{\bx}^2}\sqrt{\norm{\bH^T\bH}_s}+\sqrt{\frac{1}{n}\sum_{t=0}^{n-1}w_t^2}}^2\\
&\leq\pp{\sqrt{\sigma_x^2\p{1+\Delta}}\norm{\bH}_s+\sqrt{\frac{1}{n}\sum_{t=0}^{n-1}w_t^2}}^2
\end{align}
where $\norm{\cdot}_s$ denotes the spectral norm, and in the second inequality we have used the fact that $\abs{\tr{\bA\bB}}\leq\norm{\bB}_s\tr\p{\bA}$ for any $\bB$ and nonnegative definite matrix $\bA$. Due to the fact that $\ppp{h_m}\in\ell_1$ (essentially, $\ppp{h_m}\in\ell_2$ is suffice here) it can be shown that \cite{Gray} the spectral norm $\norm{\bH}_s$ is uniformly bounded, that is for all matrix dimension $n$ we have that $\norm{\bH}_s\leq M$ where $M>0$. Therefore, we obtain that
\begin{align}
\pr\ppp{\frac{1}{n}\sum_{t=0}^{n-1}Y_t^2> B} \leq \pr\ppp{\frac{1}{n}\sum_{t=0}^{n-1}W_t^2> \p{\sqrt{B}-M\sqrt{\sigma_x^2\p{1+\Delta}}}^2}
\end{align}
which can be made less than $e^{-nE\p{R}}$ by selecting a sufficiently large $B$, as can be shown by a simple application of the Chernoff bound. As for the remaining terms: by taking the gradient of $V\p{\tilde{\bx}\vert\tilde{\by},\btheta,k}$ w.r.t. $\btheta$, we obtain that the components of $\hat{\btheta}$ are given by the solutions of the following set of equations
\begin{align}
\sum_{m=0}^{n-1}\tilde{x}_m\tilde{y}_m^*e^{-\frac{2\pi jmq}{n}} = \sum_{m=0}^{n-1}\abs{\tilde{y}_m}^2e^{-\frac{2\pi jmq}{n}}\sum_{l=0}^k\hat{\alpha}_le^{\frac{2\pi jml}{n}}, \ \ \ \text{for}\ q = 0,\ldots,k,
\label{alphaEst}
\end{align}
and
\begin{align}
\hat{\sigma}_0^2 = \frac{1}{n}\sum_{m=0}^{n-1}\abs{\tilde{x}_m-\tilde{y}_m\sum_{l=1}^k\hat{\alpha}_le^{\frac{2\pi jml}{n}}}^2.
\label{alphaEst2}
\end{align}
Note that
\begin{align}
\pr\ppp{\hat{\sigma}_0^2< B^{-1}}&\leq\pr\ppp{\hat{\sigma}_0^2< B^{-1},\;\frac{1}{n}\sum_{t=0}^{n-1}\abs{\tilde{Y}_m}^2\leq \sqrt{B},\min\limits_{0\leq m\leq n-1}\abs{\tilde{Y}_m}^2\geq\tau} \nonumber\\
&\ \ + \pr\ppp{\frac{1}{n}\sum_{t=0}^{n-1}\abs{\tilde{Y}_m}^2>\sqrt{B}}+\pr\ppp{\max\limits_{0\leq m\leq n-1}\abs{\tilde{Y}_m}^2\leq\tau} \nonumber\\
&\leq\pr\ppp{\hat{\sigma}_0^2< B^{-1},\;\frac{1}{n}\sum_{t=0}^{n-1}\abs{\tilde{Y}_m}^2\leq \sqrt{B},\min\limits_{0\leq m\leq n-1}\abs{\tilde{Y}_m}^2\geq\tau} \nonumber\\
&\ \ + \pr\ppp{\frac{1}{n}\sum_{t=0}^{n-1}\abs{\tilde{Y}_m}^2>\sqrt{B}}+\pr\ppp{\frac{1}{n}\sum_{t=0}^{n-1}\abs{\tilde{Y}_m}^2\leq\tau}
\label{Lse2}
\end{align}
where $\tau>0$. As before, the exponential decay rate of the last two terms on the right-hand side of \eqref{Lse2} can be made arbitrarily large by selecting a sufficiently large $B$ and sufficiently small $\tau$. As for the first term, we first note that by using \eqref{alphaEst}, we have
\begin{align}
\re\ppp{\sum_{m=0}^{n-1}\sum_{q=0}^k\tilde{x}_m\tilde{y}_m^*\alpha_q^*e^{-\frac{2\pi jmq}{n}}} &= \re\ppp{\sum_{m=0}^{n-1}\sum_{q=0}^k\alpha_q^*\abs{\tilde{y}_m}^2e^{-\frac{2\pi jmq}{n}}\sum_{l=0}^k\hat{\alpha}_le^{\frac{2\pi jml}{n}}}\\
& =  \sum_{m=0}^{n-1}\abs{\tilde{y}_m}^2\abs{\sum_{l=0}^k\hat{\alpha}_le^{\frac{2\pi jml}{n}}}^2.
\label{alphaEst3}
\end{align}
Thus, using the last result we obtain
\begin{align}
\hat{\sigma}_0^2 &= \frac{1}{n}\sum_{m=0}^{n-1}\abs{\tilde{x}_m-\tilde{y}_m\sum_{l=0}^k\hat{\alpha}_le^{\frac{2\pi jml}{n}}}^2\label{quadNorm}\\
&= \frac{1}{n}\sum_{m=0}^{n-1}\abs{\tilde{x}_m}^2-2\re\ppp{\sum_{m=0}^{n-1}\sum_{l=0}^k\tilde{x}_m\tilde{y}_m^*\alpha_l^*e^{-\frac{2\pi jml}{n}}}+ \frac{1}{n}\sum_{m=0}^{n-1}\abs{\tilde{y}_m}^2\abs{\sum_{l=0}^k\hat{\alpha}_le^{\frac{2\pi jml}{n}}}^2\\
&=\frac{1}{n}\sum_{m=0}^{n-1}\abs{\tilde{x}_m}^2- \frac{1}{n}\sum_{m=0}^{n-1}\abs{\tilde{y}_m}^2\abs{\sum_{l=0}^k\hat{\alpha}_le^{\frac{2\pi jml}{n}}}^2,
\label{lowerVar}
\end{align}
which in turn must be nonnegative, and hence
\begin{align}
\frac{1}{n}\sum_{m=0}^{n-1}\abs{\tilde{y}_m}^2\abs{\sum_{l=0}^k\hat{\alpha}_le^{\frac{2\pi jml}{n}}}^2&\leq\frac{1}{n}\sum_{m=0}^{n-1}\abs{\tilde{x}_m}^2\leq\sigma^2_x\p{1+\Delta}.
\label{upperOne}
\end{align}
Thus, given that $\min\limits_{0\leq m\leq n-1}\abs{\tilde{y}_m}^2\geq\tau$, by using \eqref{lowerVar} we obtain that
\begin{align}
\frac{1}{n}\sum_{m=0}^{n-1}\abs{\tilde{y}_m}^2\abs{\sum_{l=0}^k\hat{\alpha}_le^{\frac{2\pi jml}{n}}}^2&\geq\tau\frac{1}{n}\sum_{m=0}^{n-1}\abs{\sum_{l=0}^k\hat{\alpha}_le^{\frac{2\pi jml}{n}}}^2\\
&=\tau\sum_{l=0}^k\sum_{r=0}^k\hat{\alpha}_l\hat{\alpha}_r^*\frac{1}{n}\sum_{m=0}^{n-1}e^{\frac{2\pi jm(l-r)}{n}}\\
&=\tau\sum_{l=0}^k\abs{\hat{\alpha}_l}^2. 
\end{align}
Therefore, invoking \eqref{upperOne}, we finally obtain that
\begin{align}
\sum_{l=0}^k\abs{\hat{\alpha}_l}^2\leq \frac{\sigma_x^2\p{1+\Delta}}{\tau}\define C\p{\tau,\Delta}.
\label{upperBound}
\end{align}
Now, recall that $\ppp{\hat{\alpha}_l}$ minimizes the quadratic norm
$$
\frac{1}{n}\sum_{m=0}^{n-1}\abs{\tilde{x}_m-\tilde{y}_m\sum_{l=0}^k\alpha_le^{\frac{2\pi jml}{n}}}^2
$$
over all vectors $\balpha = \p{\alpha_0,\ldots,\alpha_k}$ in $\mathbb{C}^{k+1}$. Also, due to \eqref{upperBound}, the minimizing vector must lie in the $\p{k+1}$-dimensional hypersphere $\balpha^H\balpha\leq C\p{\tau,\Delta}$. Now, fix $\delta>0$ and define the grid $\calG\define \ppp{\delta\cdot i:\;i = -\left\lceil C\p{\tau,\Delta}/\delta\right\rceil,\ldots,-1,0,1,\ldots,\left\lceil C\p{\tau,\Delta}/\delta\right\rceil}$, and let $\calG^{k+1}$ designate the $\p{k+1}$th Cartesian power of $\calG$. From the uniform continuity of the above quadratic form within the set of all energy limited vectors $\by$, one can find a sufficiently small value of $\delta$ (depending on $C$) such that there exists a vector $\balpha = \balpha_R+j\balpha_I$ where $\balpha_R,\balpha_I\in\calG^{k+1}$, i.e., the nearest neighbor of the minimizer, satisfying (given of course the event that $\hat{\sigma}_0^2<B^{-1}$)
\begin{align}
\frac{1}{n}\sum_{m=0}^{n-1}\abs{\tilde{x}_m-\tilde{y}_m\sum_{l=0}^k\alpha_le^{\frac{2\pi jml}{n}}}^2\leq\frac{2}{B}+\delta'
\end{align}
where $\delta'$ is a sufficiently small value (depending on $\delta$). For brevity, in the following, we will omit this negligible additive term. Whence
\begin{align}
&\pr\ppp{\hat{\sigma}_0^2< B^{-1},\;\frac{1}{n}\sum_{t=0}^{n-1}\abs{\tilde{Y}_m}^2\leq \sqrt{B},\min\limits_{0\leq m\leq n-1}\abs{\tilde{Y}_m}^2\geq\tau} \nonumber\\
&= \pr\ppp{\frac{1}{n}\sum_{m=0}^{n-1}\abs{\tilde{X}_m-\tilde{Y}_m\sum_{l=0}^k\hat{\alpha}_le^{\frac{2\pi jml}{n}}}^2< \frac{1}{B},\;\frac{1}{n}\sum_{t=0}^{n-1}\abs{\tilde{Y}_m}^2\leq \sqrt{B},\min\limits_{0\leq m\leq n-1}\abs{\tilde{Y}_m}^2\geq\tau}\\
&\leq \pr\ppp{\bigcup_{\balphat_R,\balphat_I\in\calG^{k+1}}\ppp{\frac{1}{n}\sum_{m=0}^{n-1}\abs{\tilde{X}_m-\tilde{Y}_m\sum_{l=0}^k\alpha_le^{\frac{2\pi jml}{n}}}^2< \frac{2}{B},\;\frac{1}{n}\sum_{t=0}^{n-1}\abs{\tilde{Y}_m}^2\leq \sqrt{B}}}\\
&\leq\sum_{\balphat_R,\balphat_I\in\calG^{k+1}}\pr\ppp{\frac{1}{n}\sum_{m=0}^{n-1}\abs{\tilde{X}_m-\tilde{Y}_m\sum_{l=0}^k\alpha_le^{\frac{2\pi jml}{n}}}^2< \frac{2}{B},\;\frac{1}{n}\sum_{t=0}^{n-1}\abs{\tilde{Y}_m}^2\leq \sqrt{B}}\\
&\leq \p{\left\lceil \frac{C\p{\tau,\Delta}}{\delta}\right\rceil}^{2k+2}\cdot\max_{\balphat_R,\balphat_I\in\calG^{k+1}}\;\pr\ppp{\frac{1}{n}\sum_{m=0}^{n-1}\abs{\tilde{X}_m-\tilde{Y}_m\sum_{l=0}^k\alpha_le^{\frac{2\pi jml}{n}}}^2< \frac{2}{B},\;\frac{1}{n}\sum_{t=0}^{n-1}\abs{\tilde{Y}_m}^2\leq \sqrt{B}}.
\end{align}
Let us show that the term on the right-most side of can be made exponentially less than $e^{-nE(R)}$. Define the set
\begin{align}
\calF_{\alpha}\define\ppp{\p{\tilde{\bx},\tilde{\by}}:\;\frac{1}{n}\sum_{m=0}^{n-1}\abs{\tilde{x}_m-\tilde{y}_m\sum_{l=0}^k\alpha_le^{\frac{2\pi jml}{n}}}^2< \frac{2}{B},\;\frac{1}{n}\sum_{m=0}^{n-1}\abs{\tilde{y}_m}^2\leq \sqrt{B}}.
\end{align}
Accordingly, define an auxiliary joint density
\begin{align}
g\p{\tilde{\bx},\tilde{\by}} = \frac{1}{\p{2\pi^2/\sqrt{B}}^{n}}\prod_{m=0}^{n-1}\exp\ppp{-\frac{B}{2}\abs{\tilde{x}_m-\tilde{y}_m\sum_{l=0}^k\alpha_le^{\frac{2\pi jml}{n}}}^2}\exp\ppp{-\frac{1}{\sqrt{B}}\abs{\tilde{y}_m}^2}.
\end{align}
Thus,
\begin{align}
1&\geq\int_{\calF_{\alpha}}g\p{\mathrm{d}\bx,\mathrm{d}\by}\\
&\geq\frac{\text{Vol}\ppp{\calF_{\alpha}}}{\p{2\pi^2/\sqrt{B}}^{n}}\inf_{\p{\bxt,\byt}\in\calF_{\alpha}}\ppp{\prod_{m=0}^{n-1}\exp\ppp{-\frac{B}{2}\abs{\tilde{x}_m-\tilde{y}_m\sum_{l=0}^k\alpha_le^{\frac{2\pi jml}{n}}}^2}\exp\ppp{-\frac{1}{\sqrt{B}}\abs{\tilde{y}_m}^2}}\\
&\geq\frac{\text{Vol}\ppp{\calF_{\alpha}}}{\p{2\pi^2/\sqrt{B}}^{n}}\exp\ppp{-2n}\\
& = \text{Vol}\ppp{\calF_{\alpha}}\p{\frac{2\pi^2e^2}{\sqrt{B}}}^{-n},
\end{align}
and therefore $\text{Vol}\ppp{\calF_{\alpha}}\leq\p{2\pi^2e^2/\sqrt{B}}^n$. Thus, we now obtain that
\begin{align}
\pr\ppp{\calF_{\alpha}} &= \int_{\p{\bxt,\byt}\in\calF_{\alpha}}\mu\p{\bx}W\p{\by\vert\bx}\mathrm{d}\bx\mathrm{d}\by\\
&\leq \text{Vol}\ppp{\calF_{\xi}}\p{2\pi\sigma^2}^{-n/2}\nu^{-1}\\
& \leq \p{2\pi\sigma^2}^{-n/2}\nu^{-1}\exp\ppp{-\frac{n}{2}\log\p{\frac{B}{2\pi^2e^2}}}\\
& = \nu^{-1}\exp\ppp{-\frac{n}{2}\log\p{\frac{B\sigma^2}{\pi e^2}}}
\end{align}
which, again, can be made less than $e^{-nE\p{R}}$ by selecting $B$ sufficiently large. 
\end{proof}

To overbound \eqref{overBound} within $H_n\p{B}$, we derive an upper bound on its numerator and a lower bound on its denominator, and show that these are exponentially equivalent. To this end, we first need to define a conditional typical set of our continuous-valued input-output sequences and establish some of its properties. For a given pair of vectors $\p{\tilde{\bx},\tilde{\by}}$ and $\epsilon > 0$, define the $k$th order conditional $\epsilon$-type of $\tilde{\bx}$ given $\tilde{\by}$ as
\begin{align}
\calT_\epsilon^k\p{\tilde{\bx}\vert\tilde{\by}}\define&\left\{\tilde{\bx}'\in\mathbb{C}^n:\;\abs{\frac{1}{n}\sum_{m=0}^{n-1}\abs{\tilde{x}_m}^2-\frac{1}{n}\sum_{m=0}^{n-1}\abs{\tilde{x}_m'}^2}\leq\epsilon\right.\nonumber\\
&\left. \ \ \ \ \abs{\frac{1}{n}\sum_{m=0}^{n-1}\re\ppp{\tilde{x}_m\tilde{y}_m^*e^{-\frac{2\pi jlm}{n}}}-\frac{1}{n}\sum_{m=0}^{n-1}\re\ppp{\tilde{x}'_m\tilde{y}_m^*e^{-\frac{2\pi jlm}{n}}}}\leq\epsilon,\;l=0,\ldots,k,\right.\nonumber\\
&\left. \ \ \ \ \abs{\frac{1}{n}\sum_{m=0}^{n-1}\Img\ppp{\tilde{x}_m\tilde{y}_m^*e^{-\frac{2\pi jlm}{n}}}-\frac{1}{n}\sum_{m=0}^{n-1}\Img\ppp{\tilde{x}'_m\tilde{y}_m^*e^{-\frac{2\pi jlm}{n}}}}\leq\epsilon,\;l=0,\ldots,k\right\}.
\label{TypConCon}
\end{align}
This set is regarded as a conditional type of $\tilde{\bx}$ given $\tilde{\by}$ as it contains all vectors which, within $\epsilon$, have the same sufficient statistics as $\tilde{\bx}$ induced by our backward channel. In the following, we will show that for every conditional type $\calT_\epsilon^k\p{\tilde{\bx}\vert\tilde{\by}}$, and for any two vectors $\bu$ and $\bv$ in $\calT_\epsilon^k\p{\tilde{\bx}\vert\tilde{\by}}$, the conditional pdf's $W\p{{\by}\vert\bu}$ and $W\p{{\by}\vert\bv}$ are exponentially equivalent. This property will be used later on. To show that this is indeed the case, we will need the following lemma. 
\begin{lemma}\label{lem:3}
Let $L$ and $n_b$ be natural numbers such that\footnote{Without loss of generality, it is assumed that $n_b$ (\emph{bin} size) is a divisor of $n$, and that all $L$ bins have the same size.} $L=n/n_b$. Define the sets $\calG_{1,\epsilon} \define \ppp{\epsilon\cdot i:\;i=0,1,\ldots,\left\lceil LP_x/\epsilon\right\rceil}$. Also, let
\begin{align}
\hat{\calT}_\epsilon^k\p{\tilde{\bx}\vert\tilde{\by}} \define \ppp{\bigcup\limits_{\substack{\boldsymbol{\mathcal{P}}^{\epsilon}}}\ \bigtimes_{l=1}^L\mathscr{B}_{l}^{\epsilon}\p{P_l}}\bigcap\tilde{\calT}_\epsilon\p{\tilde{\bx}\vert\tilde{\by}}
\end{align}
where $\bigtimes$ designates a Cartesian product, and
\begin{align}
\tilde{\calT}_\epsilon^k\p{\tilde{\bx}\vert\tilde{\by}}\define&\left\{\tilde{\bx}'\in\mathbb{C}^n:\;\abs{\frac{1}{n}\sum_{m=0}^{n-1}\re\ppp{\tilde{x}_m\tilde{y}_m^*e^{-\frac{2\pi jlm}{n}}}-\frac{1}{n}\sum_{m=0}^{n-1}\re\ppp{\tilde{x}'_m\tilde{y}_m^*e^{-\frac{2\pi jlm}{n}}}}\leq\epsilon,\;l=0,\ldots,k,\right.\nonumber\\
&\left. \ \ \ \abs{\frac{1}{n}\sum_{m=0}^{n-1}\Img\ppp{\tilde{x}_m\tilde{y}_m^*e^{-\frac{2\pi jlm}{n}}}-\frac{1}{n}\sum_{m=0}^{n-1}\Img\ppp{\tilde{x}'_m\tilde{y}_m^*e^{-\frac{2\pi jlm}{n}}}}\leq\epsilon,\;l=0,\ldots,k\right\},
\end{align}
and
\begin{align}
\mathscr{B}_{l}^\epsilon\p{P_l} &\define \left\{\tilde{\bx}' \in\mathbb{C}^{n_b}:\;\abs{\norm{\tilde{\bx}'}^2-n_{b}P_l}\leq\epsilon,\nonumber\right\}
\end{align}
where 
\begin{align}
\label{powerSetg}
\boldsymbol{\mathcal{P}}^{\epsilon}&\define\ppp{\bP\in\calG_{1,\epsilon}^L:\;\abs{\frac{1}{L}\sum_{i=1}^LP_i-\frac{1}{n}\sum_{m=0}^{n-1}\abs{\tilde{x}_m}^2} \leq\epsilon}
\end{align}
where $\calG_{1,\epsilon}^L$ is the $L$th Cartesian power of $\calG_{1,\epsilon}$. Then,
\begin{align}
\calT_\epsilon^k\p{\tilde{\bx}\vert\tilde{\by}}\subseteq\hat{\calT}_\epsilon^k\p{\tilde{\bx}\vert\tilde{\by}}.
\end{align}
\end{lemma}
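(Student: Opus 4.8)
The plan is to fix an arbitrary $\tilde{\bx}'\in\calT_\epsilon^k\p{\tilde{\bx}\vert\tilde{\by}}$ and verify that it lies in each of the two sets whose intersection defines $\hat{\calT}_\epsilon^k\p{\tilde{\bx}\vert\tilde{\by}}$, namely $\tilde{\calT}_\epsilon^k\p{\tilde{\bx}\vert\tilde{\by}}$ and $\bigcup_{\boldsymbol{\mathcal{P}}^{\epsilon}}\bigtimes_{l=1}^{L}\mathscr{B}_{l}^{\epsilon}\p{P_l}$. Membership in $\tilde{\calT}_\epsilon^k\p{\tilde{\bx}\vert\tilde{\by}}$ is immediate: the inequalities defining it — the bounds on the real and imaginary parts of $\frac{1}{n}\sum_{m}\tilde{x}_m\tilde{y}_m^{*}e^{-2\pi jlm/n}$ for $l=0,\ldots,k$ — form a subcollection of the inequalities defining $\calT_\epsilon^k\p{\tilde{\bx}\vert\tilde{\by}}$, which additionally imposes the total-energy constraint $\abs{\frac{1}{n}\sum_m\abs{\tilde{x}_m}^2-\frac{1}{n}\sum_m\abs{\tilde{x}'_m}^2}\le\epsilon$; hence $\calT_\epsilon^k\p{\tilde{\bx}\vert\tilde{\by}}\subseteq\tilde{\calT}_\epsilon^k\p{\tilde{\bx}\vert\tilde{\by}}$ trivially. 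The substance is to show $\tilde{\bx}'\in\bigcup_{\boldsymbol{\mathcal{P}}^{\epsilon}}\bigtimes_{l=1}^{L}\mathscr{B}_{l}^{\epsilon}\p{P_l}$, i.e. to exhibit a quantized power profile $\bP\in\boldsymbol{\mathcal{P}}^{\epsilon}$ compatible with $\tilde{\bx}'$.

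To construct it, I would partition the frequency index set $\ppp{0,1,\ldots,n-1}$ into $L=n/n_b$ consecutive bins of $n_b$ indices each, and let $\tilde{\bx}'_{(l)}\in\mathbb{C}^{n_b}$ denote the $l$-th block of $\tilde{\bx}'$, so that the concatenation of $\tilde{\bx}'_{(1)},\ldots,\tilde{\bx}'_{(L)}$ is $\tilde{\bx}'$ and $\sum_{l=1}^{L}\norm{\tilde{\bx}'_{(l)}}^2=\norm{\tilde{\bx}'}^2$. For each $l$ I set $P_l\in\calG_{1,\epsilon}$ to be a grid point closest to $\frac{1}{n_b}\norm{\tilde{\bx}'_{(l)}}^2$. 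This is well defined because the total-energy constraint inherited from $\calT_\epsilon^k\p{\tilde{\bx}\vert\tilde{\by}}$ gives $\frac{1}{n_b}\norm{\tilde{\bx}'_{(l)}}^2\le\frac{1}{n_b}\norm{\tilde{\bx}'}^2=L\cdot\frac{1}{n}\norm{\tilde{\bx}'}^2\le L\p{P_x+\epsilon}$, a bounded quantity lying within the span of $\calG_{1,\epsilon}$ (whose nominal upper end is taken wide enough to cover every attainable bin power).

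It then remains to check the two defining conditions of the union with this $\bP=\p{P_1,\ldots,P_L}$. First, for each $l$, $\abs{\frac{1}{n_b}\norm{\tilde{\bx}'_{(l)}}^2-P_l}$ is at most half the step of $\calG_{1,\epsilon}$, so $\abs{\norm{\tilde{\bx}'_{(l)}}^2-n_bP_l}$ is at most $n_b$ times that, which — for the grid resolution matched to the tolerance of $\mathscr{B}_{l}^{\epsilon}$ — lies within that tolerance; thus $\tilde{\bx}'_{(l)}\in\mathscr{B}_{l}^{\epsilon}\p{P_l}$, and concatenating over $l$ yields $\tilde{\bx}'\in\bigtimes_{l=1}^{L}\mathscr{B}_{l}^{\epsilon}\p{P_l}$. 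Second, $\bP\in\calG_{1,\epsilon}^{L}$ by construction, and averaging the per-bin estimates together with the total-energy constraint gives
\[
\abs{\frac{1}{L}\sum_{l=1}^{L}P_l-\frac{1}{n}\sum_{m=0}^{n-1}\abs{\tilde{x}_m}^{2}}\le\max_{l}\abs{P_l-\frac{1}{n_b}\norm{\tilde{\bx}'_{(l)}}^{2}}+\abs{\frac{1}{n}\sum_{m=0}^{n-1}\abs{\tilde{x}'_m}^{2}-\frac{1}{n}\sum_{m=0}^{n-1}\abs{\tilde{x}_m}^{2}},
\]
the first term being at most half the grid step and the second at most $\epsilon$, so the sum stays within the tolerance of $\boldsymbol{\mathcal{P}}^{\epsilon}$. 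Hence $\bP\in\boldsymbol{\mathcal{P}}^{\epsilon}$, so $\tilde{\bx}'$ lies in the union; combined with the first membership, $\tilde{\bx}'\in\hat{\calT}_\epsilon^k\p{\tilde{\bx}\vert\tilde{\by}}$, and since $\tilde{\bx}'$ was arbitrary this proves $\calT_\epsilon^k\p{\tilde{\bx}\vert\tilde{\by}}\subseteq\hat{\calT}_\epsilon^k\p{\tilde{\bx}\vert\tilde{\by}}$.

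The delicate point — and the one I would be most careful about — is the constant bookkeeping in the last two steps: one must check simultaneously that $\calG_{1,\epsilon}$ is fine enough that the per-bin rounding error (which scales like $n_b$ times the grid half-step) is absorbed by the tolerance in $\mathscr{B}_{l}^{\epsilon}$, that its range is wide enough to contain every $\frac{1}{n_b}\norm{\tilde{\bx}'_{(l)}}^{2}$ (guaranteed finite by the total-energy constraint carried over from $\calT_\epsilon^k$), and that the $\epsilon$-slack of $\calT_\epsilon^k$ does not blow up as it is propagated through binning and rounding into the average constraint of $\boldsymbol{\mathcal{P}}^{\epsilon}$. Each of these costs at most a fixed multiplicative constant, immaterial since $\epsilon$ is ultimately sent to zero, and if necessary such a constant can be absorbed by a harmless rescaling of the grid (or of $\epsilon$) without affecting any later use of the lemma.
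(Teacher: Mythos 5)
Your proof follows essentially the same route as the paper's: quantize the per-bin block energies of the given vector onto the grid $\calG_{1,\epsilon}$ to build the profile $\bP$, then verify the bin-energy condition and the average-power condition using the total-energy constraint inherited from $\calT_\epsilon^k\p{\tilde{\bx}\vert\tilde{\by}}$, with the membership in $\tilde{\calT}_\epsilon^k\p{\tilde{\bx}\vert\tilde{\by}}$ being immediate. The constant-bookkeeping slack you flag (rounding error of order $n_b$ times the grid step, and the extra $\epsilon$ in the average constraint) is present in the paper's own argument as well and is harmless for the way the lemma is used, so your proposal is correct and matches the paper.
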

\begin{proof}
See Appendix \ref{app:1}.
\end{proof}

Intuitively speaking, the difference between $\hat{\calT}_\epsilon^k\p{\tilde{\bx}\vert\tilde{\by}}$ and $\calT_\epsilon^k\p{\tilde{\bx}\vert\tilde{\by}}$ is that in the former we split each sequence $\tilde{\bx}$ into $L$ bins, where in each bin we fix the energy. Indeed, let $\tilde{\bu},\tilde{\bv}\in{\calT}_\epsilon^k\p{\tilde{\bx}\vert\tilde{\by}}$. Due to Lemma \ref{lem:3}, we also have that $\tilde{\bu},\tilde{\bv}\in\hat{\calT}_\epsilon^k\p{\tilde{\bx}\vert\tilde{\by}}$. Then,
\begin{align}
\abs{\frac{1}{n}\log W\p{\by\vert\bu} - \frac{1}{n}\log W\p{\by\vert\bv}} = \frac{1}{2\sigma^2}\abs{\frac{1}{n}\sum_{t=0}^{n-1}\p{y_t-\sum_{l=0}^{k}h_lu_{t-l}}^2 - \frac{1}{n}\sum_{t=0}^{n-1}\p{y_t-\sum_{l=0}^{k}h_lv_{t-l}}^2}.
\end{align}
Recall that the model in \eqref{GenModel} can be represented in the following vector form $\by = \bA\bx+\bw$ where $\bA$ is a Toeplitz matrix formed by the generating sequence $\ppp{h_l}$, that is $\bA = \ppp{a_{i,j}}_{i,j} = \ppp{h_{i-j}}_{i,j}$. Now, by using the spectral decomposition theorem \cite{Widom}, we know that there exists a unitary matrix $\bR$ that diagonalizes $\bA$. Accordingly, let $\ppp{\lambda_l}_{l=1}^n$ denote the singular values associated with this transformation. Thus, we obtain that
\begin{align}
\abs{\frac{1}{n}\log W\p{\by\vert\bu} - \frac{1}{n}\log W\p{\by\vert\bv}} = \frac{1}{2\sigma^2}\abs{\frac{1}{n}\sum_{m=0}^{n-1}\abs{\hat{y}_m-\lambda_m\hat{u}_m}^2 - \frac{1}{n}\sum_{t=0}^{n-1}\abs{\hat{y}_m-\lambda_m\hat{v}_m}^2}
\end{align}
where $\hat{\by} = \bR\by$ and similarly for $\ppp{\hat{u}_m}$ and $\ppp{\hat{v}_m}$. Continuing, we see that
\begin{align}
\abs{\frac{1}{n}\log W\p{\by\vert\bu} - \frac{1}{n}\log W\p{\by\vert\bv}} \leq &\frac{1}{\sigma^2}\abs{\frac{1}{n}\sum_{m=0}^{n-1}\re\ppp{\hat{y}_m^*\lambda_m\hat{u}_m} - \frac{1}{n}\sum_{m=0}^{n-1}\re\ppp{\hat{y}_m^*\lambda_m\hat{v}_m}}\nonumber\\
&+\frac{1}{2\sigma^2}\abs{\frac{1}{n}\sum_{m=0}^{n-1}\abs{\lambda_m}^2\p{\abs{\hat{u}_m}^2-\abs{\hat{v}_m}^2}}.
\label{ineqaFirSec}
\end{align}
Now, we note that by Szeg\"{o}'s theorem \cite{Gray,Widom,Szego,Bottcher}, the Fourier basis asymptotically diagonalizes Toeplitz matrices. Accordingly, the asymptotic eigenvalues are given by the DFT of the generating sequence $\ppp{h_l}$, that is, for sufficiently large enough $n$ and any $\varepsilon>0$, we have that \cite{Widom} 
\begin{align}
\abs{\lambda_m-\sum_{l=0}^kh_le^{-2\pi jml/n}}\leq\varepsilon,\ \ m=0,\ldots,n-1,
\label{approxeig}
\end{align} 
and by the same token\footnote{Another approach is to first assume that $\bA$ is a circulant matrix, and then the Fourier basis exactly diagonalizes $\bA$ for any $n$, that is, the eigenvectors are given by the DFT matrix, and the eigenvalues are given by the DFT of $\ppp{h_l}$. Then, when taking the limit $n\to\infty$, using Szeg\"{o}'s theorem, this assumption can be dropped.}, since the Fourier basis asymptotically diagonalizes $\bA$, the eigenvectors matrix $\bR$ asymptotically equal to the Fourier basis $\bF$. Thus, using \eqref{TypConCon} and \eqref{approxeig}, we see that
\begin{align}
&\abs{\frac{1}{n}\sum_{m=0}^{n-1}\re\ppp{\hat{y}_m^*\lambda_m\hat{u}_m} - \frac{1}{n}\sum_{m=0}^{n-1}\re\ppp{\hat{y}_m^*\lambda_m\hat{v}_m}}\nonumber\\
&\ \ \ \leq\sum_{l=0}^k\abs{h_l}\abs{\varepsilon+\frac{1}{n}\sum_{m=0}^{n-1}\re\ppp{\tilde{y}_m^*\tilde{u}_me^{-2\pi jml/n}} - \frac{1}{n}\sum_{m=0}^{n-1}\re\ppp{\tilde{y}_m^*\tilde{v}_me^{-2\pi jml/n}}}\\
&\ \ \ \leq \p{\epsilon+\varepsilon}\sum_{l=0}^k\abs{h_l}\leq \p{\epsilon+\varepsilon} \cdot C_1
\end{align}
where in the last inequality we have used the fact that $\ppp{h_k}$ is absolutely summable. Now, regarding the second term on the right hand side (r.h.s.) of \eqref{ineqaFirSec}, we use the following approximation argument (which is asymptotically tight), that was used in \cite[Sec. VI]{Wasim}. Recall that due to Szeg\"{o}'s theorem, we know that the Fourier basis asymptotically diagonalizes $\bA$, and that there exists a frequency response $H\p{\omega}$ that corresponds to the linear system induced by $\bA$, and is given by the Fourier transform of the sequence $\ppp{h_i}$. Then, we use the fact that every continuous function can be approximated arbitrarily well by a sequence of staircase functions with sufficiently small spacing between jumps. In other words, we approximate the continuous frequency response $H\p{\omega}$ by a staircase function and then we take the width of each stair to zero. This approximation in turn corresponds to assuming that the eigenvalues, $\ppp{\lambda_m}$, are piecewise constant over the various $L$ bins (see Lemma \ref{lem:3}). At the final stage of the analysis (after taking the limit $n\to\infty$), we will take the limit $L\to\infty$ so that this approximation becomes superfluous. Thus, under this approximation, we obtain that 
\begin{align}
\abs{\frac{1}{n}\sum_{m=0}^{n-1}\abs{\lambda_m}^2\p{\abs{\tilde{u}_m}^2-\abs{\tilde{v}_m}^2}} & = \abs{\frac{1}{L}\sum_{l=1}^{L}\frac{1}{n_b}\sum_{m\in\calI_l}\abs{\lambda_m}^2\p{\abs{\tilde{u}_m}^2-\abs{\tilde{v}_m}^2}}\\
& = \abs{\frac{1}{L}\sum_{l=1}^{L}\abs{\lambda_l}^2\frac{1}{n_b}\sum_{m\in\calI_l}\p{\abs{\tilde{u}_m}^2-\abs{\tilde{v}_m}^2}}\\
&\leq \frac{1}{L}\sum_{l=1}^{L}\abs{\lambda_l}^2\abs{\frac{1}{n_b}\sum_{m\in\calI_l}\p{\abs{\tilde{u}_m}^2-\abs{\tilde{v}_m}^2}}\\
&\leq\frac{\epsilon}{L}\sum_{l=1}^{L}\abs{\lambda_l}^2\leq\frac{\epsilon}{L}\sum_{l=1}^{L}\pp{\sum_{v=0}^k\abs{h_v}}^2\leq\epsilon\cdot C_1^2.
\end{align}
Thus, we have shown that 
\begin{align}
\abs{\frac{1}{n}\log W\p{\by\vert\bu} - \frac{1}{n}\log W\p{\by\vert\bv}} \leq \frac{\p{\epsilon+\varepsilon}}{\sigma^2}\cdot C_1\p{1+C_1}.
\label{exponenEquiva}
\end{align}
Clearly, the right-most side of \eqref{exponenEquiva} can be made arbitrarily small by choosing $\epsilon$ sufficiently small and $n,L$ sufficiently large. Similarly, $\mu\p{\bu}$ and $\mu\p{\bv}$ are also exponentially equivalent, provided that they both belong to the support of $\mu\p{\cdot}$, namely, 
\begin{align}
\abs{\frac{1}{n}\log \mu\p{\bu} - \frac{1}{n}\log \mu\p{\bv}} \leq \epsilon\cdot C_2
\end{align}
for some constant $C_2$. Next, we provide upper and lower bounds on the volume of $\calT_\epsilon^k\p{\tilde{\bx}\vert\tilde{\by}}$, where the volume of a set $\calA\subset\mathbb{R}^n$ is defined as $\text{Vol}\ppp{\calA}\define\int_\calA\mathrm{d}\bx$. 
\begin{lemma}\label{lem:Volumes}
Let $\p{\bx,\by}\in H_n\p{B}$ for some $B>0$. Then, for every sufficiently small $\epsilon>0$, the volume of $\calT_\epsilon^k\p{\tilde{\bx}\vert\tilde{\by}}$ is bounded as follows
\begin{align}
\frac{\exp\ppp{-n\epsilon f\p{B,\Delta,k}}}{\max_{\thetavecsc}V\p{\tilde{\bx}\vert\tilde{\by},\btheta,k}}\pp{1-(2k+12)\frac{B^2}{n\epsilon^2}}\leq\text{Vol}\ppp{\calT_\epsilon^k\p{\tilde{\bx}\vert\tilde{\by}}}\leq \frac{\exp\ppp{n\epsilon f\p{B,\Delta,k}}}{\max_{\thetavecsc}V\p{\tilde{\bx}\vert\tilde{\by},\btheta,k}},
\end{align}
in which $f\p{B,\Delta,k}\define B\pp{1+\sqrt{k+1}\cdot C\p{B^{-1},\Delta}}$ where $C\p{\cdot,\cdot}$ is defined in \eqref{upperBound}.
\end{lemma}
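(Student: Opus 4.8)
The plan rests on two facts about the density $V\p{\cdot\vert\tilde{\by},\hat{\btheta},k}$ restricted to $\calT_\epsilon^k\p{\tilde{\bx}\vert\tilde{\by}}$: it is \emph{almost constant} there, and it puts \emph{almost all of its mass} there. Granting these, both bounds follow from the standard volume estimate. Using the unitary invariance of the DFT, $V\p{\cdot\vert\tilde{\by},\hat{\btheta},k}$ integrates to one against the same $n$-dimensional volume measure that defines $\text{Vol}\ppp{\cdot}$, and $V\p{\tilde{\bx}\vert\tilde{\by},\hat{\btheta},k}=\max_{\btheta}V\p{\tilde{\bx}\vert\tilde{\by},\btheta,k}$ since $\hat{\btheta}$ is the maximizer. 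Thus, once we show $e^{-n\epsilon f}\max_{\btheta}V\p{\tilde{\bx}\vert\tilde{\by},\btheta,k}\le V\p{\tilde{\bx}'\vert\tilde{\by},\hat{\btheta},k}\le e^{n\epsilon f}\max_{\btheta}V\p{\tilde{\bx}\vert\tilde{\by},\btheta,k}$ uniformly for $\tilde{\bx}'\in\calT_\epsilon^k\p{\tilde{\bx}\vert\tilde{\by}}$, the chain $\text{Vol}\ppp{\calT_\epsilon^k}\cdot\inf_{\calT_\epsilon^k}V\le\int_{\calT_\epsilon^k}V\le1$ gives the upper bound, and $\pr_V\ppp{\calT_\epsilon^k}\le\text{Vol}\ppp{\calT_\epsilon^k}\cdot\sup_{\calT_\epsilon^k}V$ combined with a lower bound $\pr_V\ppp{\calT_\epsilon^k}\ge1-\p{2k+12}B^2/\p{n\epsilon^2}$ gives the lower bound; both rearrangements are immediate.

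For the near-constancy, expand the quadratic in \eqref{BackwardChannel} at $\btheta=\hat{\btheta}$: $\frac1n\log V\p{\tilde{\bx}'\vert\tilde{\by},\hat{\btheta},k}$ equals $-\frac12\log\p{2\pi\hat{\sigma}_0^2}$ minus $\frac1{2\hat{\sigma}_0^2}$ times a linear combination of the statistics $\frac1n\sum_m\abs{\tilde{x}_m'}^2$ and $\frac1n\sum_m\tilde{x}_m'\tilde{y}_m^*e^{-2\pi jlm/n}$ (the latter weighted by $\hat{\alpha}_l^*$), plus a term that does not involve $\tilde{\bx}'$. These statistics are exactly the ones pinned to within $\epsilon$ in \eqref{TypConCon}, so the difference between the values at $\tilde{\bx}'$ and at $\tilde{\bx}$ is at most $\frac1{2\hat{\sigma}_0^2}\p{\epsilon+2\sqrt2\,\epsilon\sum_l\abs{\hat{\alpha}_l}}$, and $\sum_l\abs{\hat{\alpha}_l}\le\sqrt{k+1}\,\norm{\hat{\balpha}}$ by Cauchy--Schwarz. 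On $H_n\p{B}$ we have $\hat{\sigma}_0^2\ge1/B$, and the constraint $\frac1n\sum_m\abs{\tilde{y}_m}^2\le B$ together with \eqref{upperOne} bounds $\norm{\hat{\balpha}}$ by a quantity of the order of $C\p{B^{-1},\Delta}$; collecting these yields $\abs{\frac1n\log V\p{\tilde{\bx}'\vert\cdot}-\frac1n\log V\p{\tilde{\bx}\vert\cdot}}\le\epsilon f\p{B,\Delta,k}$ with $f$ as stated.

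For the concentration, the key observation is that the normal equations \eqref{alphaEst} defining $\hat{\balpha}$, and the identity \eqref{lowerVar}--\eqref{alphaEst2} for $\hat{\sigma}_0^2$, are \emph{precisely} the statements that, under $V\p{\cdot\vert\tilde{\by},\hat{\btheta},k}$, the means of the $2k+3$ statistics $\frac1n\sum_m\abs{\tilde{x}_m'}^2$, $\frac1n\sum_m\re\ppp{\tilde{x}_m'\tilde{y}_m^*e^{-2\pi jlm/n}}$ and $\frac1n\sum_m\Img\ppp{\tilde{x}_m'\tilde{y}_m^*e^{-2\pi jlm/n}}$ ($l=0,\dots,k$) equal their values at $\tilde{\bx}$. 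A union bound and Chebyshev's inequality then give $\pr_V\ppp{\tilde{\bX}'\notin\calT_\epsilon^k\p{\tilde{\bx}\vert\tilde{\by}}}\le\epsilon^{-2}\sum\p{\text{variances}}$, and it remains to bound the variances: each linear statistic has variance $\frac{\hat{\sigma}_0^2}{n}\cdot\frac1n\sum_m\abs{\tilde{y}_m}^2=O\p{B^2/n}$ on $H_n\p{B}$, while the energy statistic, by the second-moment formula for Gaussian quadratic forms, has variance of order $\frac1n\big(\hat{\sigma}_0^4+\hat{\sigma}_0^2\cdot\frac1n\sum_m\abs{\tilde{y}_m\sum_l\hat{\alpha}_le^{2\pi jlm/n}}^2\big)=O\p{B^2/n}$ by \eqref{upperOne} and $\hat{\sigma}_0^2\le\sigma_x^2\p{1+\Delta}$; tracking constants over the $2k+3$ terms produces the factor $2k+12$. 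I expect this concentration step to be the main obstacle: one must check that the normal equations center \emph{every} statistic correctly (so that Chebyshev bites around the right point) and then carry out the somewhat tedious complex-valued second-moment computations, verifying that all variances are uniformly $O\p{B^2/n}$ on $H_n\p{B}$ --- this is exactly where the two seemingly ad hoc constraints in $H_n\p{B}$ (the bound on $\frac1n\sum_m\abs{\tilde{y}_m}^2$ and the floor on $\hat{\sigma}_0^2$) are used. A secondary nuisance is the bookkeeping that converts the $H_n\p{B}$ constraints into the explicit $f\p{B,\Delta,k}$, and the care needed so that $\text{Vol}\ppp{\cdot}$ and the normalization of $V$ refer to the same $n$-dimensional measure (the DFT of a real sequence is conjugate-symmetric, so the effective number of real degrees of freedom is $n$, not $2n$).
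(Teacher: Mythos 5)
Your proposal is correct and follows essentially the same route as the paper's proof: it fits $\hat{\btheta}$ by the normal equations (which is exactly the moment-matching property \eqref{LLN1}--\eqref{LLN3}), uses near-constancy of $V\p{\cdot\vert\tilde{\by},\hat{\btheta},k}$ over $\calT_\epsilon^k\p{\tilde{\bx}\vert\tilde{\by}}$ via the pinned sufficient statistics, Cauchy--Schwarz and $\hat{\sigma}_0^2\geq 1/B$, then gets the upper bound from $\int V\leq 1$ and the lower bound from a union bound plus Chebyshev giving $V\p{\p{\calT_\epsilon^k}^c}\leq\p{2k+12}B^2/\p{n\epsilon^2}$, just as in \eqref{upperboundVolu}--\eqref{kappa}. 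The variance computations and bookkeeping you defer are precisely the ones carried out in \eqref{kappa2}--\eqref{kappa}, so there is no substantive gap.
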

\begin{proof}[Proof of Lemma \ref{lem:Volumes}]
Fix a pair $\p{\tilde{\bx},\tilde{\by}}\in H_n\p{B}$ and let 
\begin{align}
&\rho_{xx} \define n^{-1}\sum_{m=0}^{n-1}\abs{\tilde{x}_m}^2,\\
&\rho^l_{R} \define n^{-1}\sum_{m=0}^{n-1}\re\ppp{\tilde{x}_m\tilde{y}_m^*e^{-\frac{2\pi jlm}{n}}}, \ l=0,\ldots,k,
\end{align}
and
\begin{align}
&\rho^{l}_{I} \define n^{-1}\sum_{m=0}^{n-1}\Img\ppp{\tilde{x}_m\tilde{y}_m^*e^{-\frac{2\pi jlm}{n}}}, \ l=0,\ldots,k.
\end{align}
Also, let $\hat{\btheta}$ designate the vector of parameters $\p{\sigma_0^2,\alpha_0,\ldots,\alpha_k}$ that corresponds to the solution of the following set of equations 
\begin{align}
\bE_V\ppp{\sum_{m=0}^{n-1}\abs{\tilde{X}_m}^2} = n\rho_{x,x},
\label{LLN1}
\end{align} 
and
\begin{align}
\bE_V\ppp{\sum_{m=0}^{n-1}\re\ppp{\tilde{X}_m\tilde{y}_m^*e^{-\frac{2\pi jlm}{n}}}} = n\rho^l_{R},\ l=0,\ldots,k,
\label{LLN2}
\end{align} 
and
\begin{align}
\bE_V\ppp{\sum_{m=0}^{n-1}\Img\ppp{\tilde{X}_m\tilde{y}_m^*e^{-\frac{2\pi jlm}{n}}}} = n\rho^l_{I},\ l=0,\ldots,k
\label{LLN3}
\end{align} 
where the expectation $\bE_V$ is taken w.r.t. the backward channel $V\p{\cdot\vert\tilde{\by},\btheta,k}$. This parameter vector can be found by solving the set of equations \eqref{alphaEst}-\eqref{alphaEst2}, namely, it attains the maximum of $V\p{\tilde{\bx}\vert\tilde{\by},\btheta,k}$ as can be easily seen. Then,
\begin{align}
1&\geq V\p{\ppp{\calT_\epsilon^k\p{\tilde{\bx}\vert\tilde{\by}}}\vert\tilde{\by},\hat{\btheta},k}\\
& = \int_{\calT_\epsilon^k\p{\tilde{\bxt}\vert\tilde{\byt}}}V(\bar{\bx}\vert\tilde{\by},\hat{\btheta},k)\mathrm{d}\bar{\bx}\\
&\geq \text{Vol}\ppp{\calT_\epsilon^k\p{\tilde{\bx}\vert\tilde{\by}}}\inf_{\bar{\bxt}\in\calT_\epsilon^k\p{\tilde{\bxt}\vert\tilde{\byt}}}V(\bar{\bx}\vert\tilde{\by},\hat{\btheta},k)\\
&\geq \text{Vol}\ppp{\calT_\epsilon^k\p{\tilde{\bx}\vert\tilde{\by}}}\exp\ppp{-n\pp{\frac{1}{2\hat{\sigma}^2}\p{1+2\sum_{l=0}^k\abs{\hat{\alpha}_l}}\epsilon}}V(\tilde{\bx}\vert\tilde{\by},\hat{\btheta},k)\\
&\geq \text{Vol}\ppp{\calT_\epsilon^k\p{\tilde{\bx}\vert\tilde{\by}}}\exp\ppp{-n\epsilon B\pp{1+\sqrt{k+1}\cdot C\p{B^{-1},\Delta}}}V(\tilde{\bx}\vert\tilde{\by},\hat{\btheta},k)
\label{upperboundVolu}
\end{align}
where the second last inequality readily follows from a derivation similar to \eqref{exponenEquiva} and the fact that $\p{\tilde{\bx},\tilde{\by}}\in H_n\p{B}$, and the last inequality follows from \eqref{upperBound} along with the fact that for any sequence $\bz = \p{z_1,\ldots,z_n}$, we have $\norm{\bz}_1\leq\sqrt{n}\norm{\bz}_{\ell_2}$. Thus, we obtain
\begin{align}
\text{Vol}\ppp{\calT_\epsilon^k\p{\tilde{\bx}\vert\tilde{\by}}}\leq \frac{\exp\ppp{n\epsilon f\p{B,\Delta,k}}}{\max_{\thetavecsc}V\p{\tilde{\bx}\vert\tilde{\by},\btheta,k}} = \exp\ppp{n\epsilon f\p{B,\Delta,k}}\exp\ppp{n\log\p{\pi e\hat{\sigma}_0^2}}
\end{align}
For a lower bound on the volume, we first note that 
\begin{align}
1 &= V\p{\ppp{\calT_\epsilon^k\p{\tilde{\bx}\vert\tilde{\by}}\cup\ppp{\calT_\epsilon^k\p{\tilde{\bx}\vert\tilde{\by}}}^c}\vert\tilde{\by},\hat{\btheta},k} \\
&\leq \text{Vol}\ppp{\calT_\epsilon^k\p{\tilde{\bx}\vert\tilde{\by}}}\exp\ppp{n\epsilon f\p{B,\Delta,k}}\max_{\thetavecsc}V\p{\tilde{\bx}\vert\tilde{\by},\btheta,k} + V\p{\ppp{\calT_\epsilon^k\p{\tilde{\bx}\vert\tilde{\by}}}^c\vert\tilde{\by},\hat{\btheta},k}
\label{onestep}
\end{align}
where the last inequality follows by the same considerations in \eqref{upperboundVolu}. Using Boole's inequality
\begin{align}
V\p{\ppp{\calT_\epsilon^k\p{\tilde{\bx}\vert\tilde{\by}}}^c\vert\tilde{\by},\hat{\btheta},k}&\leq V\p{\abs{\frac{1}{n}\sum_{m=0}^{n-1}{\abs{\tilde{X}_m}^2}-\rho_{xx}}>\epsilon\Bigg\vert\tilde{\by},\hat{\btheta},k}\nonumber\\
& \ \ \ + \sum_{l=0}^kV\p{\abs{\frac{1}{n}\sum_{m=0}^{n-1}\re\ppp{\tilde{X}_m\tilde{y}_m^*e^{-\frac{2\pi jlm}{n}}}-\rho^l_{R}}>\epsilon\Bigg\vert\tilde{\by},\hat{\btheta},k}\nonumber\\
& \ \ \ + \sum_{l=0}^kV\p{\abs{\frac{1}{n}\sum_{m=0}^{n-1}\Img\ppp{\tilde{X}_m\tilde{y}_m^*e^{-\frac{2\pi jlm}{n}}}-\rho^l_{I}}>\epsilon\Bigg\vert\tilde{\by},\hat{\btheta},k}
\label{LargeDevi}
\end{align}
Now, due to \eqref{LLN1}-\eqref{LLN3}, the events in \eqref{LargeDevi} are large deviations events. For example, for the second term on the right hand side of \eqref{LargeDevi}, let us define the following Gaussian density
\begin{align}
\delta_G\p{\bz} = \frac{1}{\p{\pi\hat{\sigma}_0^2}^n}\exp\ppp{-\frac{1}{\hat{\sigma}_0^2}\sum_{m=0}^{n-1}\abs{z_m}^2}.
\label{GaussianMeasure2}
\end{align}
Whence, by Chebychev's inequality we obtain, for any $0\leq l\leq k$,
\begin{align}
V\p{\abs{\frac{1}{n}\sum_{m=0}^{n-1}\re\ppp{\tilde{X}_m\tilde{y}_m^*e^{-\frac{2\pi jlm}{n}}}-\rho^l_{xy}}>\epsilon\Bigg\vert\tilde{\by},\hat{\btheta},k} &= \delta_G\ppp{\bZ:\;\abs{\frac{1}{n}\sum_{m=0}^{n-1}\re\ppp{Z_m\tilde{y}_m^*e^{-\frac{2\pi jlm}{n}}}}>\epsilon}\nonumber\\
&\leq \frac{1}{\epsilon^2}\bE_\delta\p{\frac{1}{n}\sum_{m=0}^{n-1}\re\ppp{Z_m\tilde{y}_m^*e^{-\frac{2\pi jlm}{n}}}}^2\\
&\leq\frac{1}{n\epsilon^2}\pp{\sum_{m=0}^{n-1}\abs{\tilde{y}_m}^2}\bE_\delta\ppp{\frac{1}{n}\sum_{m=0}^{n-1}\abs{Z_m}^2}\\
&\leq\frac{B\hat{\sigma}_0^2}{n\epsilon^2}\leq\frac{B^2}{n\epsilon^2}.
\label{kappa2}
\end{align}
For the third term on the right hand side of \eqref{LargeDevi}, we again have that
\begin{align}
V\p{\abs{\frac{1}{n}\sum_{m=0}^{n-1}\Img\ppp{\tilde{X}_m\tilde{y}_m^*e^{-\frac{2\pi jlm}{n}}}-\rho^l_{xy}}>\epsilon\Bigg\vert\tilde{\by},\hat{\btheta},k}&\leq\frac{1}{\epsilon^2}\bE_\delta\p{\frac{1}{n}\sum_{m=0}^{n-1}\Img\ppp{Z_m\tilde{y}_m^*e^{-\frac{2\pi jlm}{n}}}}^2\\
&\leq\frac{B^2}{n\epsilon^2}.
\label{kappa3}
\end{align}
Finally, exactly in the same way, one obtains that
\begin{align}
V\p{\abs{\frac{1}{n}\sum_{m=0}^{n-1}\re\ppp{\abs{\tilde{X}_m}^2e^{\frac{2\pi jlm}{n}}}-\rho_{xx}^l}>\epsilon\Bigg\vert\tilde{\by},\hat{\btheta},k}\leq \frac{12B^2}{n\epsilon^2}
\label{kappa}
\end{align}
Therefore, using \eqref{onestep}, \eqref{kappa2}, \eqref{kappa3}, and \eqref{kappa}, we finally conclude that
\begin{align}
\text{Vol}\ppp{\calT_\epsilon^k\p{\tilde{\bx}\vert\tilde{\by}}}&\geq \frac{\exp\ppp{-n\epsilon f\p{B,\Delta,k}}}{\max_{\thetavecsc}V\p{\tilde{\bx}\vert\tilde{\by},\btheta,k}}\pp{1-\frac{12B^2}{n\epsilon^2}-2k\frac{B^2}{n\epsilon^2}}\\
&\geq\frac{\exp\ppp{-n\epsilon f\p{B,\Delta,k}}}{\max_{\thetavecsc}V\p{\tilde{\bx}\vert\tilde{\by},\btheta,k}}\pp{1-(2k+12)\frac{B^2}{n\epsilon^2}}.
\end{align}
\end{proof}

We are now ready to derive a lower bound on the denominator of \eqref{overBound}. Since $\tilde{\bx}\in\calS_o^\delta\p{\tilde{\bx},\tilde{\by}}$, then, in view of \eqref{exponenEquiva}, there exist a sufficiently small $\epsilon>0$ and a sufficiently large $n$ (both depending on $\delta$) such that $\calT_\epsilon^k\p{\tilde{\bx}\vert\tilde{\by}}\subset\calS_o^\delta\p{\tilde{\bx},\tilde{\by}}$. Thus, using Lemma \ref{lem:Volumes}, we get
\begin{align}
\int_{\calS_o^\delta\p{\tilde{\bxt},\tilde{\byt}}}\mu\p{\bx'}\mathrm{d}\bx'&\geq\int_{\calT_\epsilon^k\p{\tilde{\bxt}\vert\tilde{\byt}}}\mu\p{\bx'}\mathrm{d}\bx'\\
&\geq\text{Vol}\ppp{\calT_\epsilon^k\p{\tilde{\bx}\vert\tilde{\by}}}\cdot\inf_{\bxt'\in\calT_\epsilon^k\p{\tilde{\bxt}\vert\tilde{\byt}}}\mu\p{\bx'}\\
&\geq \frac{\exp\ppp{-n\epsilon f\p{B,\Delta,k}}}{\max_{\thetavecsc}V\p{\tilde{\bx}\vert\tilde{\by},\btheta,k}}\pp{1-(2k+12)\frac{B^2}{n\epsilon^2}}e^{-nC_2\epsilon}\mu\p{\tilde{\bx}}.
\label{eq:1}
\end{align}

We next overbound the numerator of \eqref{overBound}. The basic idea here is to decompose $\calS_u\p{\tilde{\bx},\tilde{\by}}$ into subexponential number of conditional types, where for each conditional type, $\int_{\calT_\epsilon^k\p{\tilde{\bxt}\vert\tilde{\byt}}}\mu\p{\bx'}\mathrm{d}\bx'$ is overestimated using Lemma \ref{lem:Volumes}. Yet, this cannot be done directly, simply because not every $\tilde{\bxt}'\in\calS_u\p{\tilde{\bx},\tilde{\by}}$ is such that $\p{\tilde{\bxt}',\tilde{\by}}\in H_n\p{B}$ and hence we cannot apply Lemma \ref{lem:Volumes} to $\calT_\epsilon^k\p{\tilde{\bxt}'\vert\tilde{\by}}$. Thus, in order to alleviate this difficulty, let us divide $\calS_u\p{\tilde{\bx},\tilde{\by}}$ into two subsets, $\calS_u\p{\tilde{\bx},\tilde{\by}}\cap H_n\p{B_0\vert\tilde{\by}}$ and $\calS_u\p{\tilde{\bx},\tilde{\by}}\cap H_n^c\p{B_0\vert\tilde{\by}}$, where $H_n\p{B_0\vert\tilde{\by}}\define\ppp{\bx':\;\p{\bx',\tilde{\by}}\in H_n\p{B_0}}$, $B_0\geq B$, being a constant to be chosen later. Now, in the first set we can apply Lemma \ref{lem:Volumes} while the second has a very low probability provided that $B_0$ is sufficiently large. Let $B$ be large enough so that \eqref{Hconstranit} holds and fix $\p{\tilde{\bx},\tilde{\by}}\in H_n\p{B}$. Similarly to Lemma \ref{lem:Suff}, one can choose $B_0$ so large such that for every $\by'\in H_n\p{B\vert\tilde{\bx}}$, we have 
\begin{align}
\int_{H_n^c\p{B_0\vert\tilde{\byt}}}\mu\p{\bx'}\mathrm{d}\bx'\leq e^{-nQ\p{B_0}},
\end{align}
for all large $n$, where $Q\p{B_0}>0$ can be made arbitrarily large. Thus, we have
\begin{align}
\int_{\calS_u\p{\bxt,\byt}}\mu\p{\bx'}\mathrm{d}\bx'&\leq \int_{\calS_u\p{\bxt,\byt}\cap H_n\p{B_0\vert\byt}}\mu\p{\bx'}\mathrm{d}\bx' + e^{-nQ\p{B_0}}.
\label{eq:2}
\end{align}
Let us now subdivide the domain of the first term on the r.h.s. of the above inequality into conditional $\epsilon$-types, whose volumes can be overestimated by Lemma \ref{lem:Volumes}. To this end, we will need the number of such sets required to cover the whole domain of integration, that is $\calS_u\p{\bx,\by}\cap H_n\p{B_0\vert\by}\subset H_n\p{B_0\vert\by}$. We note that within this set, $n^{-1}\sum_{m=0}^{n-1}\abs{\tilde{x}'_m}^2\leq B_0$, $n^{-1}\sum_{m=0}^{n-1}\abs{\tilde{y}_m}^2\leq B_0$, and hence also $n^{-1}\abs{\sum_{m=0}^{n-1}\re\ppp{\tilde{x}'_m\tilde{y}_m^*e^{2\pi jlm/n}}}\leq B_0$ and $n^{-1}\abs{\sum_{m=0}^{n-1}\Img\ppp{\tilde{x}'_m\tilde{y}_m^*e^{2\pi jlm/n}}}\leq B_0$ for all $l = 0,\ldots,k$. Thus, the number of conditional types $\ppp{\calT_\epsilon^k\p{\bx'\vert\by}}$ needed to cover $H_n\p{B_0\vert\by}$ is not larger than $\p{2B_0/\epsilon}^{2k+3}$. Therefore,
\begin{align}
\int_{\calS_u\p{\tilde{\bxt},\tilde{\byt}}\cap H_n\p{B_0\vert\tilde{\byt}}}\mu\p{\bx'}\mathrm{d}\bx'&\leq \sum\limits_{\calT_\epsilon^k\p{\tilde{\bxt}'\vert\tilde{\byt}}\subset\calS_u\p{\tilde{\bxt},\tilde{\byt}}\cap H_n\p{B_0\vert\tilde{\byt}}}\int_{\calT_\epsilon^k\p{\tilde{\bxt}'\vert\tilde{\byt}}}\mu\p{\bx''}\mathrm{d}\bx''\\
&\leq\p{\frac{2B_0}{\epsilon}}^{2k+3}\sup_{\tilde{\bxt}'\in\calS_u\p{\tilde{\bxt},\tilde{\byt}}}\ppp{\text{Vol}\ppp{\calT_\epsilon^k\p{\tilde{\bx}'\vert\tilde{\by}}}\cdot\sup_{\bxt''\in\calT_\epsilon^k\p{\tilde{\bxt}'\vert\tilde{\byt}}}\mu\p{\bx''}}\nonumber\\
&\leq \p{\frac{2B_0}{\epsilon}}^{2k+3}\exp\ppp{n\epsilon f\p{B,\Delta,k}}e^{nC_2\epsilon}\sup_{\tilde{\bxt}'\in\calS_u\p{\tilde{\bxt},\tilde{\byt}}}\frac{\mu\p{\tilde{\bx}'}}{\max_{\thetavecsc}V\p{\tilde{\bx}'\vert\tilde{\by},\btheta,k}}\nonumber\\
&\leq \p{\frac{2B_0}{\epsilon}}^{2k+3}\exp\ppp{n\epsilon f\p{B,\Delta,k}}e^{nC_2\epsilon}\frac{\mu\p{\tilde{\bx}}}{\max_{\thetavecsc}V\p{\tilde{\bx}\vert\tilde{\by},\btheta,k}}.
\label{eq:3}
\end{align}
Therefore, combining \eqref{eq:1}, \eqref{eq:2}, and \eqref{eq:3}, we get for all sufficiently large $n$, 
\begin{align}
\sup_{\p{\tilde{\bxt},\tilde{\byt}}\in H_n\p{B}}\frac{\int_{\calS_u\p{\tilde{\bxt},\tilde{\byt}}}\mu\p{\bx'}\mathrm{d}\bx'}{\int_{\calS_0^\delta\p{\tilde{\bxt},\tilde{\byt}}}\mu\p{\bx'}\mathrm{d}\bx'}\leq &\pp{1-(2k+12)\frac{B^2}{n\epsilon^2}}^{-1}\p{\frac{2B_0}{\epsilon}}^{2k+3}e^{2n\epsilon \pp{C_2+f\p{B,\Delta,k}}}\nonumber\\
&\ \cdot\pp{1+e^{-nQ\p{B_0}}\sup_{\p{\tilde{\bxt},\tilde{\byt}}\in H_n\p{B}}\frac{\max_{\thetavecsc}V\p{\tilde{\bx}\vert\tilde{\by},\btheta,k}}{\mu\p{\tilde{\bx}}}}.
\label{ratioSub}
\end{align}
We next provide the conditions under which the last bound is indeed a subexponential function of $n$. To this end, let us first handle the squared brackets in \eqref{ratioSub}, and show it tends to unity as $n\to\infty$ by choosing $Q\p{B_0}$ to be sufficiently large. Note that the supremum can be bounded by
\begin{align}
\sup_{\p{\tilde{\bxt},\tilde{\byt}}\in H_n\p{B}}\frac{\max_{\thetavecsc}V\p{\tilde{\bx}\vert\tilde{\by},\btheta,k}}{\mu\p{\tilde{\bx}}} &= \sup_{\p{\tilde{\bxt},\tilde{\byt}}\in H_n\p{B}}\frac{\p{\pi e\hat{\sigma}_0^2}^{-n/2}}{\mu\p{\tilde{\bx}}}\\
&\leq\frac{\p{\pi eB^{-1}}^{-n/2}}{\nu^{-1}e^{\p{1+\Delta}n/2}},
\end{align}
and that the normalization constant $\nu$ can also be upper bounded as follows 
\begin{align}
\nu = \int_{\bxt\in\Psi_{\Delta}}\mathrm{d}\bx\exp\ppp{-\frac{1}{2\sigma_x^2}\sum_{t=0}^{n-1}x_t^2}\leq e^{-1\p{1-\Delta}n/2}\pp{2\pi e\sigma_x^2\p{1+\Delta}}^{n/2}. 
\end{align}
Whence, using the last results and \eqref{ratioSub}, we see that by choosing $B_0$ so large so that
\begin{align}
Q\p{B_0}> \frac{1}{2}\pp{\log B+\log\sigma_x^2+\log\p{1+\Delta}+2\Delta},
\end{align}
the last term in the squared brackets in \eqref{ratioSub} tends to unity as $n\to\infty$, as required. Thus, in order that \eqref{ratioSub} will be a subexponential function of $n$, we let $\epsilon = \epsilon_n$ tend to zero and $k = k_n$, such that 
\begin{align}
\lim_{n\to\infty}\frac{1}{n}\log\ppp{\pp{1-(2k_n+12)\frac{B^2}{n\epsilon_n^2}}^{-1}\p{\frac{2B_0}{\epsilon_n}}^{2k_n+3}e^{2n\epsilon_n \pp{C_2+f\p{B,\Delta,k_n}}}} = 0,
\label{req}
\end{align}
or, equivalently, that the following hold simultaneously 
\begin{align}
k_n\log\frac{1}{\epsilon_n}=o\p{n},\label{req1}\\
\lim_{n\to\infty}\sqrt{k_n}\epsilon_n = 0,\label{req2}
\end{align}
and
\begin{align}
\lim_{n\to\infty}\frac{n\epsilon_n^2}{k_n} = C_2,\label{req3}
\end{align}
where $C_2$ is some sufficiently large constant, and \eqref{req1}, \eqref{req2}, and \eqref{req3} follow from the midterm, right, and left terms on the left hand side of \eqref{req}. This happens if $\epsilon_n = o\p{n^{-1/4}}$ and hence $k_n = o\p{n^{1/2}}$. Whence, we obtain that \eqref{ratioSub} is subexponential function of $n$, and thus
\begin{align}
\lim_{n\to\infty}\frac{1}{n}\log \bar{P}_{e,u}\p{R,n}\leq \frac{1}{n}\log \bar{P}_{e,0}^\delta\p{R,n},
\end{align}
as required. Finally, to complete the proof of the theorem, it remains to show \eqref{justified}. Note that both $\calS_0\p{\bx,\by}$ and $\calS_0^\delta\p{\bx,\by}$ correspond to a known channel. This is, actually, a similar (and simpler) problem to that we considered above, and is very related to the problem considered in \cite[Eqs. (33)-(39)]{NeriUni}, where \eqref{justified} has been proven. In the sequel, we briefly describe how to obtain \eqref{justified}. Similarly to the above analysis, using Lemma \ref{lem:1}, we would like to show that the ratio
\begin{align}
\frac{\int_{\calS_o^\delta\p{\tilde{\bxt},\tilde{\byt}}}\mu\p{\bx'}\mathrm{d}\bx'}{\int_{\calS_o\p{\tilde{\bxt},\tilde{\byt}}}\mu\p{\bx'}\mathrm{d}\bx'}
\label{asas}
\end{align}
is uniformly overbounded by a subexponential function of $n$, over $\p{\tilde{\bxt},\tilde{\byt}}\in H_n$ where $H_n$ is defined exactly as in \eqref{HnSet}. For a given pair of vectors $\p{\tilde{\bx},\tilde{\by}}$ and $\epsilon > 0$, define the $k$th order conditional $\epsilon$-type $\calT_\epsilon^k\p{\tilde{\bx}\vert\tilde{\by}}$ exactly as in \eqref{TypConCon}. Accordingly, we know that for any $\tilde{\bu},\tilde{\bv}\in\calT_\epsilon^k\p{\tilde{\bx}\vert\tilde{\by}}$ the conditional pdf's $W\p{\by\vert\bu}$ and $W\p{\by\vert\bv}$ are exponentially equivalent, that is, \eqref{exponenEquiva} holds. Then, in view of the last fact, there exists a sufficiently small $\epsilon_1>0$ and a sufficiently large $n$ such that $\calT_\epsilon^k\p{\tilde{\bx}\vert\tilde{\by}}\subset \calS_o\p{\tilde{\bxt},\tilde{\byt}}$, and another $\epsilon_2>0$ and a sufficiently large $n$ (both depending on $\delta$) such that $\calS_o^\delta\p{\tilde{\bxt},\tilde{\byt}}\subset \calT_\epsilon^k\p{\tilde{\bx}\vert\tilde{\by}}$. Then, using the same techniques as previously described, it is possible to overbound the numerator and underbound the denominator of the r.h.s. of \eqref{asas} in terms of the volumes of the conditional types $\calT_\epsilon^k\p{\tilde{\bx}\vert\tilde{\by}}$, and show that \eqref{asas} is overbounded by a subexponential function of $n$.
\appendices
\numberwithin{equation}{section}
\section{Proof of Lemma \ref{lem:3}}
\label{app:1}
We need to show the inclusion
\begin{align}
\calT_\epsilon^k\p{\tilde{\bx}\vert\tilde{\by}}\subseteq\hat{\calT}_\epsilon^k\p{\tilde{\bx}\vert\tilde{\by}},
\end{align}
namely, for any $\bar{\bx}\in\calT_\epsilon^k\p{\tilde{\bx}\vert\tilde{\by}}$ also $\bar{\bx}\in\hat{\calT}_\epsilon^k\p{\tilde{\bx}\vert\tilde{\by}}$. Using the definitions of these sets we see that in order to show the above inclusion we only need to show that for every $\bar{\bx}\in\calT_\epsilon^k\p{\tilde{\bx}\vert\tilde{\by}}$, there exist a sequence $\ppp{P_m}_{m=1}^L\in\boldsymbol{\mathcal{P}}^{\epsilon}$ such that for any $1\leq l\leq L$,
\begin{align}
\abs{\norm{\bar{\bx}_{\p{l-1}n_b+1}^{ln_b}}^2-n_{b}P_l}\leq\epsilon
\label{alocatePP}
\end{align}
where $\bx_l^m\define \p{x_l,x_{l+1},\ldots,x_m}$ for $m\geq l$. To this end, for each $1\leq l\leq L$, $P_l$ is chosen to be the nearest point to $\norm{\bar{\bx}_{\p{l-1}n_b+1}^{ln_b}}^2$ in the set $\calG_{1,\epsilon}^L$, namely $P_l = \left\lfloor \norm{\bar{\bx}_{\p{l-1}n_b+1}^{ln_b}}^2/\p{n_b\epsilon}\right\rfloor\cdot\epsilon$. Under this choice, obviously, \eqref{alocatePP} holds, and $\ppp{P_l}_{l=1}^L\in\boldsymbol{\mathcal{P}}^{\epsilon}$, since
\begin{align}
\abs{\frac{1}{L}\sum_{l=1}^kP_l-P_x} &= \abs{\frac{1}{L}\sum_{l=1}^L\left\lfloor \frac{\norm{\bar{\bx}_{\p{l-1}n_b+1}^{ln_b}}^2}{n_b\epsilon}\right\rfloor\epsilon-P_x}\\
&\leq\abs{\frac{1}{n}\sum_{l=1}^L\norm{\bar{\bx}_{\p{l-1}n_b+1}^{ln_b}}^2\delta-P_x}\leq\delta
\end{align}
where the last equality follows from the fact that $\bar{\bx}\in\calT_\epsilon^k\p{\tilde{\bx}\vert\tilde{\by}}$ and that $n = n_bL$. 
\ifCLASSOPTIONcaptionsoff
  \newpage
\fi
\bibliographystyle{IEEEtran}
\bibliography{strings}
\end{document}